\theoremstyle{plain}
\newtheorem{theorem}{Theorem}[section]
\newtheorem{proposition}[theorem]{Proposition}
\newtheorem{conjecture}[theorem]{Conjecture}
\theoremstyle{definition}
\newtheorem{definition}[theorem]{Definition}
\theoremstyle{remark}
\DeclareSymbolFont{AMSb}{U}{msb}{m}{n}
\DeclareMathSymbol{\K}{\mathbin}{AMSb}{"4B}
\DeclareMathSymbol{\LL}{\mathbin}{AMSb}{"4C}
\DeclareMathSymbol{\N}{\mathbin}{AMSb}{"4E}
\DeclareMathSymbol{\Z}{\mathbin}{AMSb}{"5A}
\DeclareMathSymbol{\R}{\mathbin}{AMSb}{"52}
\DeclareMathSymbol{\Q}{\mathbin}{AMSb}{"51}
\DeclareMathSymbol{\I}{\mathbin}{AMSb}{"49}
\DeclareMathSymbol{\C}{\mathbin}{AMSb}{"43}
\DeclareMathSymbol{\F}{\mathbin}{AMSb}{"46}
\DeclareMathSymbol{\E}{\mathbin}{AMSb}{"45}
\def\ket#1{{|#1\rangle}}
\def\bra#1{{\langle#1|}}
\def\braket#1#2{{\langle#1|#2\rangle}}
\def\diag{{\operatorname{diag}}}
\def\tr{{\operatorname{tr}}}
\def\I{{\operatorname{I}}}
\begin{document}

\title{Fibonacci-Lucas SIC-POVMs}

\author{Markus Grassl}
\email{markus.grassl@mpl.mpg.de}
\affiliation{${}^1$Max-Planck-Institut f\"ur die Physik des Lichts, 91058
  Erlangen, Germany}

\author{Andrew J. Scott}
\email{dr.andrew.scott@gmail.com}
\noaffiliation

\begin{abstract}
We present a conjectured family of SIC-POVMs which have an additional
symmetry group whose size is growing with the dimension.  The symmetry
group is related to Fibonacci numbers, while the dimension is related
to Lucas numbers.  The conjecture is supported by exact solutions for
dimensions $d=4,8,19,48,124,323$, as well as a numerical solution for
dimension $d=844$.
\end{abstract}

\maketitle

\section{Introduction}
\label{sec:intro}

The study of equiangular complex lines in quantum information theory
started with the thesis of Zauner,\cite{Zau99,Zau11} and
independently by Renes et al.,\cite{RBSC04} who also coined the term
symmetric informationally-complete positive operator valued measure
(SIC-POVM).  Both conjectured that a SIC-POVM exists for all
dimensions and that it could be constructed as the orbit under the
Weyl-Heisenberg (WH) group. In addition, Zauner made the remark that
one could choose the so-called fiducial vector in a particular
eigenspace of an order-three unitary---this is now known as ``Zauner's
conjecture.'' In Ref.~\citenum{ScGr10}, we provided further support
for these conjectures via numerical solutions up to dimension $67$, as
well as a couple of exact solutions for various dimensions.  The
second author extended the list of numerical solutions up to dimension
$121$, confirming Zauner's conjecture.\cite{Sco17} Imposing additional
symmetries allowed to find solutions for certain dimensions up to
$323$.  Using the very program from the second author, the list of
numerical solutions was completed up to dimension $151$ in
Ref.~\citenum{FHS17}.

Appleby \cite{App05} studied the candidates for additional symmetries
of a SIC-POVM that is covariant with respect to the Weyl-Heisenberg
group.  He made the even stronger Conjecture~C that every fiducial
vector for such a SIC-POVM was an eigenvector of an order-three
unitary that is conjugate to Zauner's matrix.  This stronger version
of Zauner's conjecture turned out to be false,\cite{Gra05} but there
is a lot of support for Conjecture~A by Appleby.\cite{App05}  It
states that WH-covariant SIC-POVMs exist in all dimensions and that
the fiducial vector is an eigenvector of a canonical order-three
unitary.  All the solutions up do dimension $67$ presented in
Ref.~\citenum{ScGr10} possess such an additional order-three
symmetry---or even more---without imposing the symmetry.

In Ref.~\citenum{ScGr10}, we have also identified putative families of
additional symmetries, and those symmetries enabled us to find exact
solutions in dimensions as large as $48$.  The list of putative
symmetries was extended in Ref.~\citenum{Sco17} and helped to find
numerical solutions in dimensions $d=99, 111,
120,124,143,147,168,172,195,199,228,259$, and $323$ which, in addition
to being WH-covariant, have symmetries of order $6$ or $9$.

Analyzing the symmetry group of the numerical solutions,\cite{GrWa17}
it turned out that the solution in dimension $124$ did not only have
the imposed order-six symmetry, but a symmetry group of order
$30$. This discovery lead to the conjectured family of symmetries and
new solutions presented here.

\section{Background and notation}
\label{sec:background}
Throughout this article, we mainly use the notation from quantum
information.\cite{ScGr10} A SIC-POVM in dimension $d$ can be described
by a set of $d^2$ rank-one projectors $\Pi_i=\ket{\psi_i}\bra{\psi_i}$ on $\C^d$
such that
\begin{alignat}{9}
  \tr(\Pi_i\Pi_j)=\frac{1+\delta_{ij}d}{1+d}.
\end{alignat}
The Weyl-Heisenberg group in dimension $d$ is generated by the cyclic
shift operator $\hat{X}$ and its Fourier-transformed version $\hat{Z}$, given by
\begin{alignat}{9}
  \hat{X}=\sum_{i=0}^{d-1} \ket{i+1}\bra{i}
  \qquad\text{and}\qquad
  \hat{Z}=\sum_{j=0}^{d-1}\omega^j \ket{j}\bra{j},
\end{alignat}
where $\omega=\exp(\frac{2\pi i}{d})$ is a complex primitive $d$-th
root of unity and addition is modulo $d$. Ignoring phase factors, the
elements of the Weyl-Heisenberg group $X^aZ^b$ can be identified with
pairs $(a,b)\in\Z_d\times\Z_d$ of integers modulo
$d$. Appleby\cite{App05} makes a particular choice for the phases and
defines the displacement operators as
\begin{alignat}{9}
  \hat{D}_{(a,b)}=\tau^{ab}\hat{X}^a \hat{Z}^b,\label{eq:displacement}
\end{alignat}
where $\tau=-\exp(\frac{\pi i}{d})$. Then a Weyl-Heisenberg covariant
SIC-POVM is given by a fiducial vector $\ket{\psi_{(0,0)}}$ and the
vectors
\begin{alignat}{9}
  \ket{\psi_{(a,b)}}=\hat{D}_{(a,b)}\ket{\psi_{(0,0)}},
  \qquad\text{where $(a,b)\in\Z_d^2$.}
\end{alignat}
In all known cases, additional symmetries are elements of the extended
Clifford group.\cite{App05}  Again ignoring phase factors, the action
of a unitary symmetry on the displacement operators
\eqref{eq:displacement} can be described by an invertible $2\times 2$
matrix $F$ over $\Z_d$ with determinant $+1$.  Likewise, anti-unitary
symmetries correspond to matrices with determinant $-1$. (In order to
simplify the presentation here, we do not distinguish between odd and
even dimensions; a more rigorous discussion can be found in
Ref.~\citenum{App05}.)  A canonical order-three unitary\cite{App05} is
an element of the Clifford group for which the corresponding matrix
$F$ has trace $-1$ and order three.  The symmetry of Zauner's
conjecture corresponds to the matrix
\begin{alignat}{9}
  F_z=\begin{pmatrix}
  0& -1\\
  1&-1
  \end{pmatrix}.\label{eq:FibonacciMatrix}
\end{alignat}
For dimensions $d$ with $d\not\equiv 3 \bmod 9$, every canonical
order-three unitary is equivalent to Zauner's matrix.  For
$d=9\ell+3$, $\ell\ge 1$, however, there is additionally the matrix
\begin{alignat}{9}
  F_a=\begin{pmatrix}
  1& 3\\
  3\ell&-2
  \end{pmatrix}  
\end{alignat}
which corresponds to a canonical order-three unitary that is not
conjugate to Zauner's matrix.  In Refs.~\citenum{ScGr10} and
\citenum{Sco17}, several putative families of additional symmetries
for certain dimensions have been identified. Families of unitary
symmetries of order $2$ and $9$ are described by matrices $F_b$ and
$F_d$, respectively, anti-unitary symmetries of order $2$ and $6$ are
given by $F_c$ and $F_e$, respectively.  The latter can be
combined\cite{Sco17} to a family of order-six anti-unitary symmetries
$F_{e'}$.  While the sequence of dimensions for which such additional
symmetries might exist is unbounded, the order of the symmetry is at
most $9$.  Here we make the following conjecture.
\begin{conjecture}\label{conj:FibanacciLucasSICPOVM}
  For the infinite sequence of dimensions
  $d_k=\varphi^{2k}+\varphi^{-2k}+1$, where $\varphi=(1+\sqrt{5})/2$, $k\ge 1$,
  there exists a Weyl-Heisenberg covariant SIC-POVM that has an
  additional anti-unitary symmetry of order $6k$ given by the
  Fibonacci matrix
  \begin{alignat}{5}
    F_f=\begin{pmatrix}
    0&1\\
    1&1
    \end{pmatrix}.
  \end{alignat}
  We term such a set of vectors a \emph{Fibonacci-Lucas SIC-POVM}.
\end{conjecture}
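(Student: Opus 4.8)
\emph{Step 1: arithmetic of $F_f$, and reduction to an existence statement.} I would begin by recording the relevant arithmetic of $F_f$. Writing $F_n$ and $L_n$ for the Fibonacci and Lucas numbers, one has $F_f^{n}=\mat{F_{n-1}}{F_n}{F_n}{F_{n+1}}$, so $\det F_f^{n}=(-1)^n$ and $\tr F_f^{n}=F_{n-1}+F_{n+1}=L_n$; since $\varphi^{2k}+\varphi^{-2k}=L_{2k}$, this gives $d_k=L_{2k}+1$. Hence $F_f^{2k}\in\SLtwo(\Z_{d_k})$ has trace $L_{2k}\equiv-1\md{d_k}$, so Cayley--Hamilton yields $F_f^{4k}+F_f^{2k}+I=0\md{d_k}$, whence $F_f^{2k}$ has order exactly three (it cannot be $I$ since $d_k\ge 4$) and $F_f^{6k}=I\md{d_k}$. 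Thus $F_f^{2k}$ is a canonical order-three matrix---conjugate to Zauner's matrix $F_z$, or to $F_a$ when $d_k\equiv3\md{9}$---and Conjecture~\ref{conj:FibanacciLucasSICPOVM} \emph{refines} Appleby's Conjecture~A: for $d=d_k$ the fiducial must not merely be an eigenvector of a canonical order-three unitary, but be fixed up to phase by an anti-unitary $2k$-th root $U_f$ of it. One further remark organizes the whole picture: $(d_k-3)(d_k+1)=L_{2k}^2-4=5F_{2k}^2$, so the real quadratic field $\Q\bigl(\sqrt{(d_k-3)(d_k+1)}\bigr)=\Q(\sqrt5)$ is the \emph{same} for every $k$, with class number one and fundamental unit $\varphi$. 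The Fibonacci--Lucas family is therefore the family of Weyl--Heisenberg covariant SICs attached to $\Q(\sqrt5)$, and $F_f$ ought to implement (under the conjectural class-field description of the fiducial) the Galois action of a unit of $\Q(\sqrt5)$---conjecturally a power of $\varphi$---generating a cyclic group of order $6k$ whose order-three subgroup is generated by $F_z=F_f^{2k}$; the growth of $6k$ with $d_k$ then reflects the growth of the conductor. The conjecture now splits into: \emph{(a)} the extended Clifford group in dimension $d_k$ contains an anti-unitary $U_f$ with Clifford matrix $F_f$ of order exactly $6k$ as a symmetry of the SIC; and \emph{(b)} a suitable $U_f$-eigenvector is a Weyl--Heisenberg SIC fiducial.

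\emph{Step 2: the symmetry operator (the routine part).} Part~(a) should be straightforward: since the extended Clifford group surjects onto $\ESLtwo(\Z_{\db})$ for the appropriate modulus $\db$, a lift $U_f$ exists, and the only work is to fix its phase so that $U_f^{2k}$ is a canonical order-three \emph{unitary} of operator order three (forcing $U_f^{6k}=I$) and so that its order as a symmetry is exactly $6k$, not a proper divisor---a finite Gauss-sum computation with the Weil-representation cocycle that I expect to go through uniformly in $k$ after checking the smallest cases, with the even-dimension and $3\mid d_k$ cases handled by the careful bookkeeping of Ref.~\citenum{App05}. A unit vector $\ket\psi$ with $U_f\ket\psi=\lambda\ket\psi$ then satisfies $U_f^{2}\ket\psi=|\lambda|^2\ket\psi=\ket\psi$, so it lies in the $+1$-eigenspace of the order-$3k$ unitary $U_f^{2}$, on which $U_f$ acts as an anti-unitary involution; the fiducial must lie in the real form it cuts out. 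This locus is far smaller than the Zauner eigenspace, which is exactly why the exact solutions for $k\le 6$ were reachable by imposing the symmetry.

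\emph{Step 3: existence of the fiducial (the main obstacle).} Part~(b) is the crux. Deforming a known exact solution is not available, as there is no continuous family linking $d_1<d_2<\cdots$; the realistic route is number-theoretic. Using the conjectural description of SIC fiducial entries as Stark-type units---special values of Shintani--Faddeev cocycles in ray class fields of real quadratic fields---one would write down the entries of the $d_k$-fiducial attached to $\Q(\sqrt5)$ with conductor essentially $d_k$, verify directly the $d_k^2-1$ overlap equations $|\braket{\psi}{\psi_{(a,b)}}|^2=(d_k+1)^{-1}$ for $(a,b)\neq(0,0)$, and check that $U_f$ permutes these entries the way multiplication by $\varphi$ acts on the class field---which forces $F_f$-covariance. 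The main obstacle is precisely the overlap verification: it amounts to proving the relevant instances of the SIC/Stark-unit conjectures, which at present is not known even for a single dimension. A complete proof must therefore await progress there; the realistic deliverable of this programme is Step~1 together with a precise identification of the arithmetic data, reducing Conjecture~\ref{conj:FibanacciLucasSICPOVM} to a clean uniform statement about units in abelian extensions of $\Q(\sqrt5)$.
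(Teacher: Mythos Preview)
The statement is a \emph{conjecture}, and the paper does not prove it: it establishes the arithmetic properties of $F_f$ (Propositions~3.2--3.3) and then supplies computational evidence---exact fiducials for $k\le 6$ and a numerical one for $k=7$. Your proposal is honest about this and is framed as a strategy, not a proof; your Step~3 correctly identifies the existence of the fiducial as the open problem and explains why it is currently out of reach. In that sense there is no disagreement: neither you nor the paper proves the conjecture.

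Where your write-up can be tightened is Step~1. You show $F_f^{2k}$ has trace $-1$ and hence order three, so $F_f^{6k}=I$; but you do not verify that the order of $F_f$ in $\GL(2,\Z_{d_k})$ is \emph{exactly} $6k$. The paper does this in two steps: first, since the integer entries of $A^n$ are strictly increasing and $A^{2k}$ already has all entries below $d_k=L_{2k}+1$, the order exceeds $2k$, leaving only $3k$ or $6k$; second, $3k$ is ruled out by a parity argument ($\det A^{3k}=-1$ for odd $k$) together with the identity $F_{6\ell}\equiv 0$ and $F_{6\ell-1}\equiv -L_{2\ell}\not\equiv 1\md{L_{4\ell}+1}$ for even $k=2\ell$. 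You defer this to Step~2 as ``straightforward,'' but it is a genuine (if short) computation with Fibonacci--Lucas identities, and it is worth recording explicitly since the conjecture asserts order exactly $6k$. Your observation that $(d_k-3)(d_k+1)=5F_{2k}^2$, placing the whole family over $\Q(\sqrt5)$, matches the paper's Proposition~3.2(3) and its remark about ray class fields.

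Your Step~3 goes beyond the paper: the paper makes no attempt at a uniform construction and simply computes case by case, whereas you sketch a route via Stark-type units and the conjectural class-field description of SIC fiducials. That is a reasonable research programme and would indeed explain \emph{why} $F_f$ appears (as the Galois action of the fundamental unit $\varphi$), but---as you say yourself---it rests on conjectures that are open even in a single dimension, so it does not at present yield a proof.
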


The conjecture is supported by exact solutions for dimensions
$d=4,8,19,48,124,323$ as well as a numerical solution for $d_7=844$.

\section{The Fibonacci symmetry}
In this section, we study some properties of the additional symmetry
given by the Fibonacci matrix. We start with the definition of
Fibonacci and Lucas numbers.
\begin{definition}[Fibonacci numbers]
Let $F_n$ denote the $n$-th Fibonacci number given by the linear
recurrence relation $F_{n+1}=F_n+F_{n-1}$ for $n\ge 1$, with $F_0=0$,
$F_1=1$.
\end{definition}
Note that we follow the general convention to denote the $n$-th
Fibonacci number by $F_n$.  At the same time, we use the symbol $F$ to
denote the $2\times 2$ matrix over $\Z_d$ corresponding to the
symmetry of a WH-covariant SIC-POVM. It should always be clear from
the context whether $F$ refers to a Fibonacci number or a matrix.
\begin{definition}[Lucas numbers]
Let $L_n$ denote the $n$-th Lucas number given by the linear
recurrence relation $L_{n+1}=L_n+L_{n-1}$ for $n\ge 1$, with $L_0=2$,
$L_1=1$.
\end{definition}
Instead of the matrix $F_f$ over the integers modulo $d$, we consider
the matrix 
\begin{alignat}{5}
A=\begin{pmatrix}
0&1\\
1&1
\end{pmatrix}\in{\rm GL}(2,\Z)
\end{alignat}
with $\det(A)=-1$, over the integers. First note that the matrix $A$
factorizes as
\begin{alignat}{9}
  A=F_z\cdot J =
  \begin{pmatrix}
    0 & -1\\
    1 & -1
  \end{pmatrix}
  \begin{pmatrix}
    1 & 0\\
    0 & -1
  \end{pmatrix},\label{eq:Fibonacci_Zauner}
\end{alignat}
where $J$ is the matrix corresponding to complex conjugation in the
standard basis.

\begin{proposition}
For $n> 0$, the $n$-th power of $A$ is given by
\begin{alignat}{5}
A^{n}=
\begin{pmatrix}
F_{n-1}&F_{n}\\
F_{n}&F_{n+1}
\end{pmatrix}\label{eq:power_fibonacci_matrix}
\end{alignat}
\end{proposition}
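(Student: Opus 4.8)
The plan is to proceed by induction on $n$, using nothing more than the defining recurrence $F_{n+1}=F_n+F_{n-1}$ and the explicit form of $A$.

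For the base case $n=1$, one computes directly: since $F_0=0$, $F_1=1$, and $F_2=1$, the claimed right-hand side of \eqref{eq:power_fibonacci_matrix} is $\mat{F_0}{F_1}{F_1}{F_2}=\mat{0}{1}{1}{1}=A$, which matches. It is worth noting that since $n>0$ the index $n-1$ is always nonnegative, so the statement only ever refers to Fibonacci numbers with index $\ge 0$, and no convention for $F_{-1}$ is needed.

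For the inductive step, assume \eqref{eq:power_fibonacci_matrix} holds for some $n\ge 1$. Then
\begin{alignat}{5}
  A^{n+1}=A^{n}A
  =\mat{F_{n-1}}{F_n}{F_n}{F_{n+1}}\mat{0}{1}{1}{1}
  =\mat{F_n}{F_{n-1}+F_n}{F_{n+1}}{F_n+F_{n+1}}
  =\mat{F_n}{F_{n+1}}{F_{n+1}}{F_{n+2}},
\end{alignat}
where the final equality uses $F_{n+1}=F_n+F_{n-1}$ together with $F_{n+2}=F_{n+1}+F_n$. This is precisely \eqref{eq:power_fibonacci_matrix} with $n$ replaced by $n+1$, so the induction is complete.

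Since every step is an elementary computation, there is no genuine obstacle here; the only point requiring a moment's care is matching the Fibonacci indexing in the base case against the convention $F_0=0$, $F_1=1$ fixed above. One could alternatively avoid induction by diagonalizing $A$ over $\Z[\varphi]$ --- its eigenvalues are $\varphi$ and $-\varphi^{-1}$ --- and reading the entries of $A^n$ from Binet's formula, but the inductive argument is shorter and entirely self-contained, so that is the route I would take.
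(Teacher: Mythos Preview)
Your proof is correct and follows essentially the same approach as the paper: both verify the base case $n=1$ and then compute $A^{n+1}=A^nA$ using the Fibonacci recurrence. Your write-up simply spells out the base case in more detail and adds the remark about diagonalization, but the argument is the same.
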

\begin{proof}
The claim is true for $n=1$. By induction, we find
\begin{alignat}{5}
A^{n+1}=A^n A
=\begin{pmatrix}
F_{n-1}&F_{n}\\
F_{n}&F_{n+1}
\end{pmatrix}
\begin{pmatrix}
0&1\\
1&1
\end{pmatrix}
=\begin{pmatrix}
F_{n}&F_{n-1}+F_{n}\\
F_{n+1}&F_{n}+F_{n+1}
\end{pmatrix}
=\begin{pmatrix}
F_{n}&F_{n+1}\\
F_{n+1}&F_{n+2}.
\end{pmatrix}
\end{alignat}
\end{proof}

The dimension $d_k$ in Conjecture~\ref{conj:FibanacciLucasSICPOVM} is
given by $d_k=L_{2k}+1$ for $k\ge 1$ (see \eqref{eq:lucas_closed}).
The first ten dimensions in this sequence are $d=4,8, 19, 48, 124,
323, 844, 2208, 5779, 15128$.  We summarize some properties of the
sequence of dimensions, which is listed as sequence A065034 in the
On-Line Encyclopedia of Integer Sequences.\cite{OEIS} 
\begin{proposition}\label{prop:d_k}\ 
  \begin{enumerate}
  \item The sequence $d_k$ obeys the linear recurrence relation
    \begin{alignat}{9}
      d_{k+3}&{}=4d_{k+2}-4d_{k+1}+d_k.
    \end{alignat}
    \item Considered modulo $3$, the sequence $\tilde{d}_k=d_k\bmod 3$
      has period four and is given by $1,2,1,0,\,1,2,1,0,\ldots$ (for
      $k=1,2,3,\ldots$). This implies that $d_k$ is divisible by $3$
      if and only if $k=4\ell$. Then $d_{4\ell}\equiv 3 \bmod 9$.
    \item The square-free part of $(d_k+1)(d_k-3)$ equals $5$.
  \end{enumerate}
\end{proposition}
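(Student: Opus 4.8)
The strategy is to work throughout from the closed forms
\[
d_k-1=\varphi^{2k}+\varphi^{-2k}=L_{2k},\qquad
d_k+1=(\varphi^{k}+\varphi^{-k})^{2},\qquad
d_k-3=(\varphi^{k}-\varphi^{-k})^{2},
\]
the last two because $(\varphi^{k}\pm\varphi^{-k})^{2}=\varphi^{2k}+\varphi^{-2k}\pm2=d_k-1\pm2$, together with Binet's formulas $\varphi^{n}+(-1/\varphi)^{n}=L_n$ and $\varphi^{n}-(-1/\varphi)^{n}=\sqrt5\,F_n$ and the fact that $(-1/\varphi)^{n}=\varphi^{-n}$ for even $n$ (so that $\varphi^{2k}\pm\varphi^{-2k}$ are $L_{2k}$ and $\sqrt5\,F_{2k}$ respectively).

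For (1): the numbers $\varphi^{2}$ and $\varphi^{-2}$ are the two roots of $t^{2}-3t+1$ (their sum is $\varphi^{2}+\varphi^{-2}=L_2=3$, their product is $1$), while $1$ is the root of $t-1$. Hence each of $\varphi^{2k}$, $\varphi^{-2k}$, and the constant $1$ is annihilated by the shift operator with characteristic polynomial $(t-1)(t^{2}-3t+1)=t^{3}-4t^{2}+4t-1$. Concretely I would substitute $t=\varphi^{2}$, $t=\varphi^{-2}$, and $t=1$ into $(t-1)(t^{2}-3t+1)$, observe that each contribution vanishes, and add the three resulting identities to get $d_{k+3}-4d_{k+2}+4d_{k+1}-d_k=0$.

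For (2): the companion matrix of the recurrence from (1) has determinant equal to the product of the roots, namely $1$, so it lies in $\GL(3,\Z)$ and the reduction of $(d_k)_{k\ge1}$ modulo any integer is purely periodic. Modulo $3$ the recurrence reads $d_{k+3}\equiv d_{k+2}-d_{k+1}+d_k$; starting from $(d_1,d_2,d_3)\equiv(1,2,1)$ one gets $d_4\equiv0$ and then $(d_5,d_6,d_7)\equiv(1,2,1)$, so the period divides $4$, and since $d_1\not\equiv d_2$ and $d_2\not\equiv d_4$ it is neither $1$ nor $2$; hence it equals $4$ with values $1,2,1,0$, so $3\mid d_k$ iff $k\equiv0\bmod4$. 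Modulo $9$ I would run the recurrence (equivalently tabulate $d_1,\dots,d_{15}$): one finds $(d_{13},d_{14},d_{15})\equiv(d_1,d_2,d_3)\bmod9$, so the period divides $12$, and since $4\ell$ is then congruent to $4$, $8$, or $12$ mod $12$ while $d_4\equiv d_8\equiv d_{12}\equiv3\bmod9$, every $d_{4\ell}$ is $\equiv3\bmod9$. (Alternatively, $d_{4\ell}-1=L_{8\ell}$, and $L_{8\ell}\equiv2\bmod9$ follows by induction on $\ell$ from $L_{m+8}=L_8L_m-L_{m-8}$ with $L_8=47\equiv2\bmod9$ and base cases $L_0=2$, $L_8\equiv2$; or from the Pisano period $24$ of the Lucas sequence mod $9$ together with $8\ell\bmod24\in\{0,8,16\}$, all of which give Lucas residue $2$.)

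For (3): by the displayed identities, $(d_k+1)(d_k-3)=\bigl((\varphi^{k}+\varphi^{-k})(\varphi^{k}-\varphi^{-k})\bigr)^{2}=(\varphi^{2k}-\varphi^{-2k})^{2}$, and $\varphi^{2k}-\varphi^{-2k}=\varphi^{2k}-(-1/\varphi)^{2k}=\sqrt5\,F_{2k}$, so $(d_k+1)(d_k-3)=5F_{2k}^{2}$. For $k\ge1$ we have $F_{2k}\neq0$, and since $5$ is square-free, the exponent of $5$ in $5F_{2k}^{2}$ is odd while every other prime exponent is even, so the square-free part is $5$ whether or not $5\mid F_{2k}$. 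The only thing requiring any care is the bookkeeping in (2): the period of $(d_k)$ mod $9$ is $12$ (not $4$, as for mod $3$), so one must check three residue classes of $4\ell$, and in the Lucas variant one needs the period-$24$ statement rather than the smaller period the sequence $(d_k)$ itself exhibits; everything else is a short finite computation plus Binet's formulas.
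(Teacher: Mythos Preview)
Your proof is correct, but your methods diverge from the paper's in instructive ways.

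For part~(1), the paper derives the auxiliary second-order relation $d_{k+2}=3d_{k+1}-d_k-1$ from the Lucas recurrence $L_{n+2}=3L_n-L_{n-2}$ and then eliminates the constant by substituting twice. You instead read off the characteristic polynomial $(t-1)(t^2-3t+1)$ directly from the closed form $d_k=\varphi^{2k}+\varphi^{-2k}+1$; this is cleaner and explains structurally why the recurrence has order three rather than two.

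For part~(2), the paper computes the generating function $D(z)$ and reduces it modulo~$3$ to exhibit the period-four denominator, then handles the $\bmod\ 9$ claim via the identity $d_{4\ell}=5F_{4\ell}^2+3$ combined with $F_4\mid F_{4\ell}$. You instead iterate the recurrence directly modulo~$3$ and modulo~$9$, invoking invertibility of the companion matrix to justify pure periodicity. Your route is more pedestrian but entirely self-contained; the paper's is slicker but imports several Fibonacci--Lucas identities from the appendix. Both are fine, and your remark that one must check all three residues $4\ell\bmod 12\in\{4,8,12\}$ in the mod-$9$ computation is a point the paper's identity-based argument sidesteps.

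For part~(3) the two arguments coincide: both reach $(d_k+1)(d_k-3)=5F_{2k}^2$, the paper via $(d_k-1)^2-4=L_{2k}^2-4$, you via $(\varphi^{2k}-\varphi^{-2k})^2$. Your explicit parity check on the exponent of~$5$ is a welcome addition, since the paper states ``square-free part equals~$5$'' without commenting on the possibility $5\mid F_{2k}$.
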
\pagebreak{3}
\begin{proof}\ 
  \begin{enumerate}
    \item From the recurrence relation for the Lucas numbers, we obtain
      $L_{n+2}= 3 L_n-L_{n-2}$. Together with the definition
      $d_k=L_{2k}+1$, we first compute
      \begin{alignat}{9}
        d_{k+2}=L_{2k+4}+1&{}=3L_{2k+2}-L_{2k}+1=3 d_{k+1}-d_{k}-1.\label{eq:d_K1}
      \end{alignat}
      Then, using \eqref{eq:d_K1} twice, we get
      \begin{alignat}{9}
        d_{k+3}=3 d_{k+2}-d_{k+1}-1&{}=4 d_{k+2}-d_{k+2}-d_{k+1}-1\nonumber\\
        &{}=4 d_{k+2}-(3 d_{k+1}-d_{k}-1)-d_{k+1}-1\nonumber\\
        &{}=4 d_{k+2}-4 d_{k+1}+d_{k}.
      \end{alignat}
    \item Consider the generating function $D(z)=\sum_{k\ge 0} d_k
      z^k$. Using standard techniques\cite{GKP94} one can show that
      \begin{alignat}{5}
        D(z)=\frac{-4z^2 + 8z - 3}{z^3 - 4z^2 + 4z - 1}
        &{}\equiv\frac{-z^2 - z}{z^3 - z^2 + z - 1}\pmod{3}\nonumber\\
        &{}=\frac{z^3+2z^2 + z}{1-z^4}=(z^3+2z^2 + z)(1+z^4+z^8+\ldots).
      \end{alignat}
      This shows that $\tilde{d}_k$ has period $4$, and that
      $d_k\equiv 0 \bmod 3$
      if and only if $k=4\ell$.  For $k=4\ell$, we compute
      \begin{alignat}{5}
        d_{4\ell}=L_{8\ell}+1=L^2_{4\ell}-1=5 F_{4\ell}^2+3,
      \end{alignat}
      where we have used \eqref{eq:L_4n} and
      \eqref{eq:squarefree}. From \eqref{eq:divisibility_Fibonacci} it
      follows that $F_{4l}\equiv 0 \bmod F_4$. Together with $F_4=3$,
      this implies that $d_{4\ell}\equiv 3 \bmod 9$.
    \item We have
      \begin{alignat}{9}
        (d_k+1)(d_k-3)=(d_k-1)^2-4=L_{2k}^2-4=5F_{2k}^2,
      \end{alignat}
      where the last equality follows from \eqref{eq:squarefree}.
  \end{enumerate}
\end{proof}
The last statement that the squarefree part of $(d_k+1)(d_k-3)$ equals
$D=5$ has to be seen in the context of the conjecture\cite{AFMY16}
that the number field of smallest degree containing the fiducial
projector in the standard basis is a ray class field over
$\Q(\sqrt{D})$. It turned out that the conjecture is true for all our
exact solutions.

Next we show that the matrix $F_f$ determining the additional symmetry
for dimension $d_k$ has the desired properties.
\begin{proposition}
When considered as an element $\tilde{A}\in{\rm GL}(2,\Z_d)$ with
$d=d_k=L_{2k}+1$, the matrix $F_f=\tilde{A}$ has the following properties:
\begin{enumerate}
\item $F_f^{2k}$ has trace $-1$.
\item For $k$ even, $F_f^{3k}$ is a scalar multiple of identity.
\item $F_f$ has order $6k$.
\item For $k\equiv 0\bmod 4$, $F_f^{2k}$ is conjugate to
  $F_a$; otherwise, it is conjugate to Zauner's matrix $F_z$.
\end{enumerate}
\end{proposition}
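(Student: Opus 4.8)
The plan is to derive all four statements from the formula $F_f^{n}=\mat{F_{n-1}}{F_{n}}{F_{n}}{F_{n+1}}$ of the preceding proposition together with the standard Fibonacci--Lucas identities $F_{n-1}+F_{n+1}=L_{n}$, $F_{2n}=F_{n}L_{n}$, $L_{2n}=L_{n}^{2}-2(-1)^{n}$, $F_{n-1}F_{n+1}-F_{n}^{2}=(-1)^{n}$, and $F_{m+n}=F_{m}L_{n}-(-1)^{n}F_{m-n}$. Statement~(1) is then immediate: $\tr(F_f^{2k})=F_{2k-1}+F_{2k+1}=L_{2k}=d_k-1\equiv-1\pmod{d_k}$. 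For~(2), combining $F_{3k}=F_{2k}L_k-(-1)^kF_k$, $F_{2k}=F_kL_k$, and $L_{2k}=L_k^2-2(-1)^k$ yields $F_{3k}=F_k\bigl(L_{2k}+(-1)^k\bigr)$, which for $k$ even equals $F_kd_k\equiv0\pmod{d_k}$; hence the off‑diagonal entries of $F_f^{3k}$ vanish mod $d_k$, and since $F_{3k+1}-F_{3k-1}=F_{3k}\equiv0$ the diagonal entries coincide, so $F_f^{3k}\equiv F_{3k-1}I$.

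For~(3) I would first note that by~(1) and $\det(F_f^{2k})=(-1)^{2k}=1$ the matrix $F_f^{2k}$ has characteristic polynomial $T^2+T+1$, so $(F_f^{2k})^3=I$ by Cayley--Hamilton, giving $\operatorname{ord}(F_f)\mid 6k$; moreover $F_f^{2k}\neq I$ because its trace is $\not\equiv2$, so $F_f^{2k}$ has order exactly $3$. It then suffices to check $F_f^{6k/q}\neq I$ for every prime $q\mid 6k$, i.e.\ for $q=2$, $q=3$, and each prime dividing $k$. The case $q=3$ is exactly $F_f^{2k}\neq I$. For $q=2$ one shows $F_f^{3k}\neq I$: if $k$ is odd this is clear from $\det(F_f^{3k})=(-1)^{3k}=-1$; if $k$ is even, $F_f^{3k}=I$ would give $F_f^{k}=(F_f^{2k})^{-1}=F_f^{4k}$, hence on the $(1,2)$‑entries $F_k\equiv F_{4k}$, and using $F_{4k}=F_{2k}L_{2k}\equiv-F_{2k}=-F_kL_k\pmod{d_k}$ this forces $F_k(L_k+1)\equiv0\pmod{d_k}$; but for $k$ even $d_k=L_{2k}+1=L_k^2-1=(L_k-1)(L_k+1)$, so $(L_k-1)\mid F_k$, which is impossible since $0<F_k<L_k-1$ (indeed $L_k-1-F_k=2F_{k-1}-1\ge1$). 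Finally, for a prime $q\ge5$ dividing $k$, $F_f^{6k/q}=I$ would force $d_k$ to divide the $(1,2)$‑entry $F_{6k/q}$, contradicting $0<F_{6k/q}<\varphi^{6k/q}\le\varphi^{6k/5}\le\varphi^{2k}<L_{2k}+1=d_k$. Since $2$, $3$, and the prime divisors of $k$ exhaust the primes dividing $6k$, this gives $\operatorname{ord}(F_f)=6k$.

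For~(4), by~(1) and~(3) the matrix $F_f^{2k}$ is a canonical order‑three matrix (trace $-1$, order $3$). If $k\not\equiv0\bmod4$ then $3\nmid d_k$ by Proposition~\ref{prop:d_k}(2), so $d_k\not\equiv3\bmod9$ and the classification recalled in Section~\ref{sec:background} forces $F_f^{2k}$ to be conjugate to $F_z$. If $k\equiv0\bmod4$ then $d_k\equiv3\bmod9$; write $d_k=3m$ with $\gcd(m,3)=1$ and use the CRT isomorphism $\GL(2,\Z_{d_k})\cong\GL(2,\Z_3)\times\GL(2,\Z_m)$. The key point is that $F_z$ and $F_a$ are separated by reduction modulo $3$: $F_a\equiv I\pmod3$ (its entries are $1,3,3\ell,-2$), whereas $F_z$ reduces to a nontrivial unipotent. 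Now $F_f$ reduces to $\mat{0}{1}{1}{1}$ modulo $3$, which has order $8$ in $\GL(2,\Z_3)$ ($F_f^{4}\equiv-I\bmod3$), and $8\mid 2k$ whenever $4\mid k$, so $F_f^{2k}\equiv I\pmod3$; while modulo $m$ the polynomial $T^2+T+1$ is separable, so $F_f^{2k}$ and $F_a$, sharing this characteristic polynomial, are conjugate in $\GL(2,\Z_m)$ (the conjugator being choosable of determinant $1$, the centralizer of a regular semisimple matrix being a maximal torus, which meets every determinant). Gluing the two CRT factors shows $F_f^{2k}$ is conjugate to $F_a$, and not to $F_z$ since $I$ is central.

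The step I expect to demand the most care is the lower bound in~(3) — excluding every proper divisor of $6k$ — and within it the even‑$k$ subcase of $q=2$, which hinges on the factorization $d_k=(L_k-1)(L_k+1)$ and the inequality $F_k<L_k-1$. The least obvious conceptual ingredient is the observation used in~(4) that, for $d_k\equiv3\bmod9$, the $F_z$ versus $F_a$ dichotomy among canonical order‑three matrices is detected simply by whether the matrix is trivial modulo $3$.
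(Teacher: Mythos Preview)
Your argument is correct throughout. Parts (1) and (2) are essentially the paper's proof (you derive $F_{3k}=F_k(L_{2k}+(-1)^k)$ from standard identities, the paper quotes the equivalent $F_{6\ell}=F_{2\ell}(L_{4\ell}+1)$), and in (4) both you and the paper reduce to showing $F_f^{2k}\equiv I\pmod 3$ when $4\mid k$; you get this from $\operatorname{ord}_{\GL(2,\Z_3)}(F_f)=8$, the paper from $21=F_8\mid F_{8\ell}$. Your explicit CRT construction of the conjugator to $F_a$ is more than the paper does---it simply invokes Appleby's dichotomy, noting that $F_f^{2k}\equiv I\bmod 3$ rules out the $F_z$ class---so your ``separable characteristic polynomial $\Rightarrow$ conjugate over $\Z_m$'' step (true, via a cyclic-vector argument, but not entirely trivial over a ring) could have been bypassed.

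The genuine divergence is in (3). The paper's route is shorter: since every entry of $A^n$ for $1\le n\le 2k$ is a Fibonacci number $\le F_{2k+1}<L_{2k}+1=d_k$, no reduction modulo $d_k$ occurs and $\tilde A^n\ne I$ in that range; hence $\operatorname{ord}(F_f)>2k$. Together with $\operatorname{ord}(F_f)\mid 6k$ this already forces $\operatorname{ord}(F_f)\in\{3k,6k\}$, leaving only $F_f^{3k}\ne I$ to verify---done via the determinant for odd $k$ and, for $k=2\ell$, via the identity $F_{6\ell-1}\equiv -L_{2\ell}\bmod d_k$. Your prime-by-prime exclusion of $6k/q$ is valid but buys the same conclusion at the cost of a separate growth estimate for $q\ge 5$ and the delicate even-$k$ contradiction (factoring $d_k=(L_k-1)(L_k+1)$ and using $F_k<L_k-1$) for $q=2$. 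The single inequality $F_{2k+1}<d_k$ in the paper does all of that lower-bound work in one stroke.
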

\begin{proof}
  From \eqref{eq:power_fibonacci_matrix} it follows that the entries
  of the matrix $A^{2k}$ over $\Z$ are strongly monotonic increasing
  in $k$.  We have
  \begin{alignat}{5}
    A^{2k}=
    \begin{pmatrix}
      F_{2k-1}&F_{2k}\\
      F_{2k}&F_{2k+1}
    \end{pmatrix}.
  \end{alignat}
  From \eqref{eq:lucas_sum_fibonacci} it follows that $d_k=L_{2k}+1>
  F_{2k+1}>F_{2k}$, and hence the order of $\tilde{A}\in{\rm GL}(2,\Z_{d_k})$ is strictly
  larger than $2k$. The determinant is $\det(A^{2k})=1$, and
  \begin{alignat}{5}
    \tr(A^{2k})=F_{2k-1}+F_{2k+1}=L_{2k}\equiv -1 \bmod d_k,
  \end{alignat}
  where we have again used \eqref{eq:lucas_sum_fibonacci}.  It follows
  that the characteristic polynomial of $F_f^{2k}$ is $x^2+x+1$, and
  $F_f^{2k}$ corresponds to a canonical order-three unitary.\cite{App05}
  As $F_f^{2k}$ has order three, the order of $F_f$ is
  a divisor of $6k$.  We already know that the order is strictly
  larger than $2k$, and therefore we are left with the possibilities
  $3k$ and $6k$.

  When $k$ is odd, the determinant of $A^{3k}$ equals $-1$, and
  therefore the order of $F_f$ is $6k$.  Now assume that $k=2\ell$
  is even. Then
  \begin{alignat}{5}
    A^{3k}=A^{6\ell}
    =\begin{pmatrix}
    F_{6\ell-1}&F_{6\ell}\\
    F_{6\ell}&F_{6\ell+1}
    \end{pmatrix}
    =\begin{pmatrix}
    F_{6\ell-1}&F_{6\ell}\\
    F_{6\ell}&F_{6\ell-1}+F_{6\ell}
    \end{pmatrix}.
  \end{alignat}
  From \eqref{eq:F_6n}, we have $F_{6\ell}\equiv 0 \bmod (L_{4\ell}+1)$,
  and hence
  \begin{alignat}{5}
    \tilde{A}^{3k}
    =\begin{pmatrix}
    F_{6\ell-1}&0\\
    0&F_{6\ell-1}
    \end{pmatrix}.\label{eq:A_3k}
  \end{alignat}
  From \eqref{eq:F_6n-1a}, we have
  $F_{6\ell-1}+L_{2\ell}=F_{2\ell-1}(L_{4\ell}+1)$, which implies
  $F_{6\ell-1}\equiv -L_{2\ell}\bmod (L_{4\ell}+1)$. In particular, we
  have $F_{6\ell-1}\not\equiv 1\bmod (L_{4\ell}+1)$, i.\,e.,
  $\tilde{A}^{3k}$ is different from identity, and hence the order of
  $F_f$ is $6k$ in this case as well.  Note that the matrix
  \eqref{eq:A_3k} is related to the matrix $F_b$ in
  Ref.~\citenum{ScGr10}.  From \eqref{eq:L_4n} it follows that the
  dimension $d_{2\ell}=L_{4\ell}+1$ is $L_{2\ell}^2-1$, i.\,e., a square
  minus one.

  In order to prove the final part, first note that the dimension
  $d_k$ is co-prime to three when $k$ is not a multiple of four, and
  for those dimensions  any matrix of order three with trace $-1$ is
  conjugate to Zauner's matrix. For $k=4\ell$, we have
  \begin{alignat}{5}
    A^{2k}=A^{8\ell}=
    \begin{pmatrix}
      F_{8\ell-1}&F_{8\ell}\\
      F_{8\ell}&F_{8\ell+1}
    \end{pmatrix}
    =\begin{pmatrix}
      F_{8\ell-1}&F_{8\ell}\\
      F_{8\ell}&F_{8\ell-1}+F_{8\ell}
    \end{pmatrix}.
  \end{alignat}
  Again using \eqref{eq:divisibility_Fibonacci}, $F_{8\ell}\equiv
  0\bmod F_8$, and with $F_8=21$, it follows that $A^{2k}=A^{8l}\bmod 3$ is
  proportional to identity. Furthermore, as $\tilde{A}^{2k}$ has order
  three, it follows that $A^{2k}=I\bmod 3$. Therefore,
  $\tilde{A}^{2k}$ is not conjugate to Zauner's matrix, but to $F_a$.

\end{proof}
In summary, the matrix $F_f$ is a candidate for an anti-unitary
symmetry of order $6k$ in dimension $d_k=L_{2k}+1$.  In the following
we show that exact fiducial vectors with this additional symmetry
exist for the first six dimensions up to $d_6=323$.  We have a
numerical solution for the next dimension $d_7=844$ as well.

\section{Exact solutions}
We can factorize $F_f$ as
\begin{alignat}{9}
  F_f=J F_f'=
  \begin{pmatrix}
    1 & 0\\
    0 & -1
  \end{pmatrix}
  \begin{pmatrix}
    0 & 1\\
  -1 & -1
  \end{pmatrix}
\end{alignat}
and obtain the corresponding unitary for the latter ${\rm SL}(2,\Z_d)$ matrix as\cite{App05}
\begin{alignat}{9}
  \hat{U}_{F_f'}=\frac{1}{\sqrt{d}}\sum_{r,s=0}^{d-1}\tau^{-r^2-2rs}\ket{r}\bra{s},
\end{alignat}
where $\tau=-\exp(\pi i/d)$.  Fiducial vectors $\ket{\psi}$ with the
$F_f$ symmetry are con-eigenvectors of this unitary, i.\,e.
\begin{alignat}{5}
  \hat{J} \hat{U}_{F_f'} \ket{\psi} = e^{i\phi} \ket{\psi}
\end{alignat}
for some irrelevant phase $\phi$, where $\hat{J}$ denotes the anti-linear operator corresponding to complex
conjugation in the standard basis.

A rigorous description of the extended Clifford group in even dimensions in fact requires  
$2\times 2$ matrices over $\Z_{2d}$ rather than $\Z_d$, a complication that has been avoided up until now.
To relate the following solutions to previous work, however, we now quote matrices over $\Z_{2d}$ as appropriate 
and follow Appleby's correspondence  precisely (Theorem 2 of Ref.~\citenum{App05}). The quoted symmetry 
group of $2\times 2$ matrices over $\Z_{2d}$ then doubly covers the corresponding group of extended Clifford operators 
(modulo its center).

\subsection{Dimension $d=4$}
In Ref.~\citenum{ScGr10}, the numerical solution labeled $4a$ has a
symmetry group of order $6$ generated by
\begin{alignat}{9}
  F_c F_z=\begin{pmatrix}
  1 & 2\\
  6 & 3
  \end{pmatrix}
  \begin{pmatrix}
    0 & 3\\
    5 & 3
  \end{pmatrix}=
  \begin{pmatrix}
    2 & 1\\
    7 & 3
  \end{pmatrix}.
\end{alignat}
The $7$-th power of this matrix is conjugate to $F_f$ by 
\begin{alignat}{9}
G=
  \begin{pmatrix}
  1 & 3\\
  3 & 2
  \end{pmatrix},
\end{alignat}
 as
\begin{alignat}{9}
G( F_c F_z)^7G^{-1}=
  \begin{pmatrix}
  1 & 3\\
  3 & 2
  \end{pmatrix}
  \begin{pmatrix}
    2 & 1\\
    7 & 3
  \end{pmatrix}^7
  \begin{pmatrix}
  1 & 3\\
  3 & 2
  \end{pmatrix}^{-1}=
  \begin{pmatrix}
  1 & 3\\
  3 & 2
  \end{pmatrix}
  \begin{pmatrix}
    2 & 5\\
    3 & 7
  \end{pmatrix}
  \begin{pmatrix}
  2 & 5\\
  5 & 1
  \end{pmatrix}=
  \begin{pmatrix}
    0 & 1\\
    1 & 1
  \end{pmatrix}=F_f.
\end{alignat}
A non-normalized exact fiducial vector with the symmetry $F_f$ is given by
\begin{alignat}{9}
\ket{\psi^{4a}}=
\text{\small$\begin{pmatrix}
8\sqrt{2}-8\\
\left((\sqrt{10}+\sqrt{2})\sqrt{1+\sqrt{5}}+4\sqrt{2}-4\right)-\left((-\sqrt{10}-\sqrt{2}+2\sqrt{5}+2)\sqrt{1+\sqrt{5}}+4\right)I\\
8I\\
\left(-(\sqrt{10}+\sqrt{2})\sqrt{1+\sqrt{5}}+4\sqrt{2}-4\right)-\left((\sqrt{10}+\sqrt{2}-2\sqrt{5}-2)\sqrt{1+\sqrt{5}}+4\right)I\\
\end{pmatrix}$},
\end{alignat}
where $I^2=-1$. This is the numerical solution $4a$ 
translated by the unitary corresponding to $G$.
The number field containing a fiducial projector is
$\E^{4a}=\Q(\sqrt{5},\sqrt{2},\sqrt{1+\sqrt{5}},\sqrt{-1})$, which is an
Abelian extension of $\Q(\sqrt{5})$.

\subsection{Dimension $d=8$}
In Ref.~\citenum{ScGr10}, the numerical solution labeled $8b$ has a
symmetry group of order $12$ generated by
\begin{alignat}{9}
F=
  \begin{pmatrix}
    6 & 11\\
    5 & 1
  \end{pmatrix}.
\end{alignat}
The $11$-th power of this matrix is conjugate to $F_f$ by 
\begin{alignat}{9}
G=
  \begin{pmatrix}
  5 & 5\\
  4 & 1
  \end{pmatrix},
\end{alignat}
 as
\begin{alignat}{9}
GF^{11}G^{-1}=
  \begin{pmatrix}
    5 & 5\\
    4 & 1
  \end{pmatrix}
  \begin{pmatrix}
    6 & 11\\
    5 & 1
  \end{pmatrix}^{11}
  \begin{pmatrix}
    5 & 5\\
    4 & 1
  \end{pmatrix}^{-1}=
  \begin{pmatrix}
    5 & 5\\
    4 & 1
  \end{pmatrix}
  \begin{pmatrix}
    7 & 3\\
    13 & 10
  \end{pmatrix}
  \begin{pmatrix}
    1 & 11\\
    12 & 5
  \end{pmatrix}=
  \begin{pmatrix}
    0 & 1\\
    1 & 1
  \end{pmatrix}
  =F_f.
\end{alignat}
A non-normalized exact fiducial vector with the symmetry $F_f$ is given by
\begin{alignat}{9}
  \ket{\psi^{8b}}=
\text{\footnotesize$\begin{pmatrix}
-4\sqrt{5}+12\\
\bigl((2\sqrt{2}-2)s_1s_2+(-\sqrt{10}+3\sqrt{2}+2\sqrt{5}-6)s_1-4s_2\bigr)I+2s_1s_2+(\sqrt{10}-3\sqrt{2})s_1-2\sqrt{10}+6\sqrt{2}\\
-4\sqrt{2}s_2I\\
\bigl((2\sqrt{2}-2)s_1s_2+(\sqrt{10}-3\sqrt{2}-2\sqrt{5}+6)s_1+4s_2\bigr)I+2s_1s_2+(-\sqrt{10}+3\sqrt{2})s_1-2\sqrt{10}+6\sqrt{2}\\
4\sqrt{5}-12\\
\bigl((-2\sqrt{2}+2)s_1s_2+(\sqrt{10}-3\sqrt{2}-2\sqrt{5}+6)s_1-4s_2\bigr)I-2s_1s_2+(-\sqrt{10}+3\sqrt{2})s_1-2\sqrt{10}+6\sqrt{2}\\
4\sqrt{2}s_2I\\
\bigl((-2\sqrt{2}+2)s_1s_2+(-\sqrt{10}+3\sqrt{2}+2\sqrt{5}-6)s_1+4s_2\bigr)I-2s_1s_2+(\sqrt{10}-3\sqrt{2})s_1-2\sqrt{10}+6\sqrt{2}
\end{pmatrix}$},
\end{alignat}
where $s_1=\sqrt{2+\sqrt{2}}$ and $s_2=\sqrt{\sqrt{5}-1}$. This is the numerical solution $8b$ 
translated by the unitary corresponding to $G$. The number field containing the fiducial projector is an Abelian
extension of $\Q(\sqrt{5})$ given by
\begin{alignat}{9}
  \E^{8b}=\Q(\sqrt{2},\sqrt{5},\sqrt{\sqrt{5}-1},\sqrt{2+\sqrt{2}},\sqrt{-1}).
\end{alignat}
The exact solution $8b$ in Ref.~\citenum{ScGr10} has a symmetry group generated by
\begin{alignat}{9}
F'=
  \begin{pmatrix}
  1 & 5\\
  13 & 0
  \end{pmatrix},
\end{alignat}
which is again conjugate to $F_f$, i.e., $HF'H^{-1}=F_f$ for the choice
\begin{alignat}{9}
H=
  \begin{pmatrix}
  1 & 4\\
  5 & 5
  \end{pmatrix}.
\end{alignat}
The unitary corresponding to $H$ translates the exact solution in Ref.~\citenum{ScGr10} 
to that given above.

\subsection{Dimension $d=19$}
In Ref.~\citenum{ScGr10}, the numerical solution labeled $19e$ has a
symmetry group of order $18$ generated by
\begin{alignat}{9}
F=
  \begin{pmatrix}
    3 & 12\\
    7 & 15
  \end{pmatrix}.
\end{alignat}
The $17$-th power of this matrix is conjugate to $F_f$ by 
\begin{alignat}{9}
G=
  \begin{pmatrix}
  11 & 10\\
  0 & 7
  \end{pmatrix},
\end{alignat}
 as
\begin{alignat}{9}
GF^{17}G^{-1}=
  \begin{pmatrix}
    11 & 10\\
    0 & 7
  \end{pmatrix}
  \begin{pmatrix}
    3 & 12\\
    7 & 15
  \end{pmatrix}^{17}
  \begin{pmatrix}
    11 & 10\\
    0 & 7
  \end{pmatrix}^{-1}=
  \begin{pmatrix}
    11 & 10\\
    0 & 7
  \end{pmatrix}
  \begin{pmatrix}
    4 & 12\\
    7 & 16
  \end{pmatrix}
  \begin{pmatrix}
    7 & 9\\
    0 & 11
  \end{pmatrix}=
  \begin{pmatrix}
    0 & 1\\
    1 & 1
  \end{pmatrix}
  =F_f.
\end{alignat}
A non-normalized exact fiducial vector with the symmetry $F_f$ can be
found online.\cite{SICs_online} This is the numerical solution $19e$
translated by the unitary corresponding to $G$.  The exact solution
$19e$ in Ref.~\citenum{ScGr10} has a symmetry group generated by the
diagonal matrix $F'=\diag(15,5)$, which yields a unitary symmetry that
is just a permutation.  Hence, the corresponding fiducial vector has a
very compact representation. For
\begin{alignat}{9}
H=
  \begin{pmatrix}
  8 & 5\\
  6 & 6
  \end{pmatrix}
\end{alignat}
we have $HF'H^{-1}=F_f$,
i.\,e., the symmetry is conjugate to $F_f$ as well. The unitary corresponding to $H$
also translates the exact solution in Ref.~\citenum{ScGr10} 
to the alternative given online.\cite{SICs_online}

\subsection{Dimension $d=48$}
The numerical solution in Ref.~\citenum{ScGr10} given for
orbit $48g$ has a symmetry group of order $24$ generated by
\begin{alignat}{9}
F=
 \begin{pmatrix}
    4 & 37\\
    25 & 63
  \end{pmatrix}.
\end{alignat}
The $41$-st power of this matrix is conjugate to $F_f$ by 
\begin{alignat}{9}
G=
  \begin{pmatrix}
  10 & 47\\
  21 & 22
  \end{pmatrix},
\end{alignat}
 as
\begin{alignat}{9}
GF^{41}G^{-1}&{}=
  \begin{pmatrix}
  10 & 47\\
  21 & 22
  \end{pmatrix}
  \begin{pmatrix}
   4 & 37\\
    25 & 63
  \end{pmatrix}^{41}
  \begin{pmatrix}
   10 & 47\\
  21 & 22
  \end{pmatrix}^{-1}\nonumber\\
  &{}=
  \begin{pmatrix}
   10 & 47\\
  21 & 22
  \end{pmatrix}
  \begin{pmatrix}
    61 & 25\\
    61 & 36
  \end{pmatrix}
  \begin{pmatrix}
    22 & 49\\
    75 & 10
  \end{pmatrix}=
  \begin{pmatrix}
    0 & 1\\
    1 & 1
  \end{pmatrix}
  =F_f.
\end{alignat}
A non-normalized exact fiducial vector with the symmetry $F_f$ can be
found online.\cite{SICs_online}
This is the numerical solution $48g$ translated by the unitary corresponding to $G$. 
The exact solution $48g$ in Ref.~\citenum{ScGr10} also has the symmetry group generated by $F$ once it is displaced by $\hat{D}_{(41,15)}$.
The unitary corresponding to
\begin{alignat}{9}
H=
  \begin{pmatrix}
  5 & 16\\
  33 & 29
  \end{pmatrix}
\end{alignat}
then translates this exact solution to the alternative given online.\cite{SICs_online}

\subsection{Dimension $d=124$}
The numerical solution of Ref.~\citenum{Sco17} labeled $124a$, which
was obtained by imposing the symmetries $F_z$ and $F_b$ of combined
order $6$, turned out to have a symmetry of order $30$ generated by
\begin{alignat}{9}
F=
  \begin{pmatrix}
     58 & 133\\
    115 & 191
  \end{pmatrix}.
\end{alignat}
This symmetry is conjugate to $F_f$ by 
\begin{alignat}{9}
G=
  \begin{pmatrix}
  100 & 15\\
  85 & 45
  \end{pmatrix},
\end{alignat}
 as
\begin{alignat}{9}
GFG^{-1}&{}=
  \begin{pmatrix}
    100 & 15\\
  85 & 45
  \end{pmatrix}
  \begin{pmatrix}
   58 & 133\\
    115 & 191
  \end{pmatrix}
  \begin{pmatrix}
   100 & 15\\
  85 & 45
  \end{pmatrix}^{-1}\nonumber\\
  &{}=
  \begin{pmatrix}
  100 & 15\\
  85 & 45
  \end{pmatrix}
  \begin{pmatrix}
    58 & 133\\
    115 & 191
  \end{pmatrix}
  \begin{pmatrix}
    45 & 233\\
     163 & 100
  \end{pmatrix}=
  \begin{pmatrix}
    0 & 1\\
    1 & 1
  \end{pmatrix}
=F_f.
\end{alignat}
In order to find an exact fiducial vector with the anti-unitary
symmetry $F_f$, we use the factorization \eqref{eq:Fibonacci_Zauner}
of $F_f$ into Zauner's matrix $F_z$ and complex conjugation. We
identify $\C^{124}$ with the real space $\R^{248}$ and find a
$10$-dimensional eigenspace of the corresponding $\R$-linear
transformation, i.\,e., we have to solve for $10$ real variables.  As
$124=2^2\times 31$, we can apply a global change of basis in order to
obtain more sparse vectors.\cite{ABBEGL04} It turned out that it
sufficient to consider only those vectors of the SIC-POVM that are
obtained as the orbit of the fiducial vector with respect to cyclic
shift operation $\hat{X}$, i.\,e., the equations
\begin{alignat}{9}
  \braket{\psi_{(0,0)}}{\psi_{(0,0)}}&{}=1,\\
  \text{and}\qquad
  |\bra{\psi_{(0,0)}}\hat{X}^a\ket{\psi_{(0,0)}}|^2&{}=\frac{1}{125}\qquad\text{for $a=1,\ldots,123$.}\label{eq:shift124}
\end{alignat}
Moreover, the equations for the exponents $a$ and $-a$ in
\eqref{eq:shift124} are identical, so that we have only $63$ equations
in total.  Computing a Gr\"obner basis with the modular approach for
polynomial equations over number fields implemented in
Magma\cite{Magma} V2.22-6 on a system with an Intel Xeon X5680 processor
with 3.33\, GHz clock speed took about 27 CPU hours and 33 GB of RAM.
It has been conjectured that the minimal field containing a fiducial
projector is the ray class field $\E$ over $\Q(\sqrt{5})$ with
conductor $d'=2d=248$ and ramification at both infinite
places.\cite{AFMY16} Using Magma,\cite{Magma} we computed a slightly
optimized representation of $\E$ as
\begin{alignat}{9}
\E^{124a}=\Q(\sqrt{5},\sqrt{2},\sqrt{31},\sqrt{-1},\sqrt{(\sqrt{5}-1)/2},\sqrt{6+\sqrt{5}},s_1,s_2,s_3),
\end{alignat}
where $s_1$ is a root of the polynomial $f_1(x)=x^3+x^2-10x-8$, $s_2$
is a root of the polynomial $f_2(x)=x^5-x^4-12x^3+21x^2+x-5$, and
$s_3$ is a root of the polynomial $f_3(x)=x^3+x^2-41x+85$.  These
three polynomials have only real roots, i.\,e., the only complex
number among the generators of $\E^{124a}$ is the imaginary unit $\sqrt{-1}$.
The generators of the Galois group of $\E^{124a}/\Q$ are given as follows
(only the non-trivial action on the generators is listed):
\begin{small}
\begin{alignat}{9}
  g_1\colon && \sqrt{5}&{}\mapsto -\sqrt{5};&
    \quad\sqrt{(\sqrt{5}-1)/2}&{}\mapsto\frac{(\sqrt{5}-1)\sqrt{-1}}{2}\sqrt{(\sqrt{5}-1)/2};\nonumber\\
    &&&&\quad\sqrt{6+\sqrt{5}}&{}\mapsto\frac{(\sqrt{5}-6)\sqrt{31}}{31}\sqrt{6+\sqrt{5}}  \nonumber\\
  g_2\colon && \sqrt{2}&{}\mapsto -\sqrt{2}\nonumber\\
  g_3\colon && \sqrt{31}&{}\mapsto -\sqrt{31}\nonumber\\
  g_4\colon && \sqrt{-1}&{}\mapsto -\sqrt{-1}\nonumber\\
  g_5\colon && \sqrt{(\sqrt{5}-1)/2}&{}\mapsto -\sqrt{(\sqrt{5}-1)/2}\kern-3cm\nonumber\\
  g_6\colon && \sqrt{6+\sqrt{5}}&{}\mapsto -\sqrt{6+\sqrt{5}}\kern-3cm\nonumber\\
  g_7\colon && s_1&{}\mapsto (s_1^2-s_1-8)/2\kern-3cm\nonumber\\
  g_8\colon && s_2&{}\mapsto \left(2s_2^4-s_2^3-22s_2^2+31s_2\right)/5\kern-3cm\nonumber\\
  g_9\colon && s_3&{}\mapsto \left(4\sqrt{5}s_3^2 + (17\sqrt{5} - 5)s_3 - 105\sqrt{5} - 5\right)/10\kern-5cm
\end{alignat}
\end{small}%
The Galois group has order $2880$ and is isomorphic to $C_{30} \times
\bigl((C_6 \times C_2 \times C_2 \times C_2) \rtimes C_2\bigr)$. It
turns out that a fiducial vector can indeed be found in the ray class
field. The most complex step in the computation, however, was to find
a representation of the fiducial vector in $\E^{124a}$.  For this, we
had to factorize several polynomials over $\E^{124a}$.  A
non-normalized exact fiducial vector can be found
online.\cite{SICs_online} This is the numerical solution $124a$ of 
Ref.~\citenum{Sco17} translated by the unitary corresponding to $G$ above.

\subsection{Dimension $d=323$}
In order to find an exact fiducial vector with anti-unitary symmetry
$F_f$, we started as in the case for dimension $d=124$. Identifying
$\C^{323}$ with $\R^{646}$, we find an eigenspace of real dimension
$19$, i.e., we have to solve for $19$ real variables. Using the
factorization $d=17\times 19$ and $19\bmod 3=1$, we can apply a change
of basis by a Clifford transformation\cite{App05} such that the additional
symmetry is diagonal modulo $19$. Thereby we obtain not only a more
sparse eigenbasis, but at the same time the basis can be represented
in $\Q(\zeta_{17})$ (the cyclotomic field generated by the primitive
$17$-th root of unity $\zeta_{17}=\exp(2\pi i/17)$) instead of
$\Q(\zeta_{323})$.  As the dimension $d=323=18^2-1$, we can
additionally use  observations from Ref.~\citenum{ABDF17} concerning the
phases $\phi_{(a,b)}$ in the overlap
\begin{alignat}{5}
  \bra{\psi_{(0,0)}}\hat{D}_{{a,b}}\ket{\psi_{(0,0)}}=\frac{e^{i\phi_{(a,b)}}}{\sqrt{d+1}}.
\end{alignat}
When the fiducial vector ${\psi_{(0,0)}}$ is properly ``aligned'', the
phases are zero when $a$ and $b$ are both multiples of $19$. This
yields $17^2=289$ polynomial equations of degree two.  Furthermore,
the phases are related to those of a SIC-POVM in dimension $19=323/17$
when $a$ and $b$ are both multiples of $17$.  The possible values are
obtained from the exact solution $19e$ given above.  Using the new
numerical solution $323c$ and floating-point approximations of the
exact phases, we are able to determine those phases and obtain another
$19^2-1=360$ polynomial equations of degree two.  It should be noted
that after linear reduction of these equations, the $19$-th root of
unity is no longer needed.  While we have in total $649$ equations for
only $19$ variables, it turns out that the corresponding variety has
dimension $4$, i.e., these equations specify the fiducial vector only
up to four free parameters. Adding the equations for the cyclic shifts
of the fiducial vector, i.e.,
\begin{alignat}{5}
  |\bra{\psi_{(0,0)}}\hat{X}^a\ket{\psi_{(0,0)}}|^2&{}=\frac{1}{324}\qquad\text{for $a=1,\ldots,322$}\label{eq:shift323}
\end{alignat}
of degree $4$, we are left with a zero-dimensional variety, i.e., a
finite number of candidates for the fiducial vector.  Like in the case
for dimension $d=124$, the most time-consuming step is not the
computation of a Gr\"obner basis for the system of polynomial
equations, but to find the exact roots of the polynomials in the
corresponding number field and to identify which of the roots are
real.

The ray class field $\E^{323c}$ over $\Q(\sqrt{5})$ has degree
$10\,368$ over $\Q$. It contains a primitive $323$-th root of unity
$\zeta_{323}$ (and hence also both $\zeta_{17}$ and $\zeta_{19}$), and
can be generated as
\begin{alignat}{5}
  \E^{323c}=\Q(\sqrt{5},\sqrt{-(1+2\sqrt{5})},\tau,\zeta_{17},\zeta_{19}),
\end{alignat}
where $\tau$ is the root of a polynomial of degree $9$ over
$\Q(\sqrt{5})$. The Galois group is isomorphic to
$C_{144}\times\bigl((C_{18}\times C_2)\rtimes C_2\bigr)$.
Interestingly, the fiducial projector
$\Pi_{(0,0)}=\ket{\psi_{(0,0)}}\bra{\psi_{(0,0)}}$ can be expressed in
a subfield
\begin{alignat}{5}
\E_\text{fid}^{323c}=\Q(\sqrt{5},\sqrt{-(1+2\sqrt{5})},\tau,\zeta_{17})
<\E^{323c}=\E_\text{fid}^{323c}(\zeta_{19})
\end{alignat}
of degree $576$ over $\Q$.  A non-normalized exact fiducial vector as
well as more details on the representation of the ray class field and
its Galois group can be found online.\cite{SICs_online} The
anti-unitary symmetry group of the corresponding SIC-POVM has been
verified to be conjugate to the group of order $36$ generated by the
matrix $F_f$.

While the fiducial projector together with the displacement operators
generate the ray class field $\E^{323c}$, as conjectured in
Ref. \citenum{AFMY16}, the Clifford orbit of the SIC-POVM contains a
projector in a subfield of considerably smaller degree.  Whether the
field $\E_\text{fid}^{323c}$ has the smallest possible degree among
the fields generated by a projector in the Clifford orbit of a
SIC-POVM in dimension $323$ is open, as well as the question what the
lowest possible degree for arbitrary dimensions is.

\section{Numerical solutions}

Our numerical search for solutions followed Refs.~\citenum{ScGr10} and
\citenum{Sco17} by first translating the Welch bound on a set of unit
vectors in $\C^d$ to the case of a SIC-POVM: for any\/
$\ket{\phi}\in\C^d$,
\begin{align}\label{eq:sicbound}
\frac{1}{d^3}\sum_{a,b,a',b'}|\bra{\phi}{\hat{D}_{(a',b')}}^\dag \hat{D}_{(a,b)}\ket{\phi}|^4=\sum_{j,k}\Big|\sum_l\braket{\phi}{j+l}\braket{l}{\phi}\braket{\phi}{k+l}\braket{j+k+l}{\phi}\Big|^2 &\geq \frac{2}{d+1}\:,
\end{align}
with equality if and only if $\ket{\phi}$ is a fiducial vector for a
WH covariant SIC-POVM.  The condition for equality
follows\cite{RBSC04} from the 2-design property of a SIC-POVM.  We may
now search for fiducial vectors by simply minimizing the LHS of the
inequality in Eq.~(\ref{eq:sicbound}), parameterized as a function of
the real and imaginary parts of the $d$ complex numbers
$\braket{j}{\phi}$, until the bound on the RHS is met.

To search for Fibonacci-Lucas SIC-POVMs we applied the method of
Ref.~\citenum{Sco17} for general anti-unitary symmetries.  Suppose
that $\hat{J}\hat{U}\ket{\psi}$ = $\lambda\ket{\psi}$ for some unitary $\hat{U}$
with
$\big(\hat{J}\hat{U}\big)^{2n}=\big(\hat{J}\hat{U}\hat{J}\hat{U}\big)^n=\big(\overline{U}\hat{U}\big)^n=I$,
where $\overline{U}$ denotes complex conjugation of matrix components
in the standard basis. We must have $|\lambda|=1$ and may in fact
assume $\lambda=1$ for con-eigenvalues. Now given that the projector
onto the eigenspace of the unitary $\overline{U}\hat{U}$ with eigenvalue $1$
is
\begin{align}
\hat{Q} = \frac{1}{n}\sum_{j=0}^{n-1} \big(\overline{U}\hat{U}\big)^j,
\end{align}
it is easy to check that the non-normalized $\ket{\psi'} =
\overline{UQ\ket{\phi}}+\hat{Q}\ket{\phi}$ solves our con-eigenvalue
problem. We can therefore replace $\ket{\phi}$ with
$\ket{\psi}=\ket{\psi'}/\sqrt{\braket{\psi'}{\psi'}}$ in
eq.~(\ref{eq:sicbound}) to search the set of con-eigenvectors of an
anti-unitary symmetry.  In particular, we take $\hat{U}=\hat{U}_{F_f'}$ to search
for fiducial vectors with the Fibonacci symmetry.

In practice the numerical search was performed by repeating a local
search from different initial trial vectors until a solution is found.
The local search used a C++ implementation of L-BFGS and only a few
trials were required to find a Fibonacci-Lucas fiducial vector in each
dimension up to $844$. Nonetheless, the search in dimension $2208$ proved
just beyond our reach.

By repeating the search for randomly chosen trial vectors under the
unitarily invariant Haar measure on $\C^d$, the entire search space
can be exhausted to identify all unique extended Clifford orbits
generated by Fibonacci-Lucas fiducial vectors. This was done for
dimensions up to $124$, where only a single Fibonacci-Lucas extended
Clifford orbit was found in each case. In all dimensions calculated
(all but $844$) the symmetry group of the orbit was found to be no
larger than that generated by $F_f$.  For dimension $844$, we used
the general approach\cite{GrWa17} based on triple-products
\begin{alignat}{5}
  t_{0ij}=\braket{\psi_0}{\psi_i}\braket{\psi_i}{\psi_j}\braket{\psi_j}{\psi_0}
=\tr(\ket{\psi_0}\bra{\psi_0}\cdot\ket{\psi_i}\bra{\psi_i}\cdot\ket{\psi_j}\bra{\psi_j})
=\tr(\Pi_0\Pi_i\Pi_j)
\end{alignat}
to verify that the additional unitary symmetry within the Clifford
group has order $21$. This implies that again the symmetry group is
exactly the one generated by $F_f$.

All numerical solutions are included as supplementary files in the
article source and also made available online.\cite{SICs_online} These
can be taken to be exact up to $150$ digits. Solutions in dimensions
up to $124$ match the exact solutions.  For dimension $323$, the
numerical solution has precisely the symmetry given by $F_f$, while
the symbolic solution is related by a Clifford transformation in order
to simplify the representation.
 
\section{Conclusions}

Identifying a putative family of SIC-POVMs for which the size of the
additional symmetry grows with the dimension allowed us to find both
exact and numerical solutions for the largest dimensions so far. For
dimension $844$, it doesn't seem to be completely out of reach to
obtain exact solutions.  The number of real parameters is still kind
of moderate in comparison to the dimension, we have to solve for at
most $42$ real variables. Based on the factorization $844=4\times 211$
and $211 \bmod 3 =1$, we can find a sparse representation of the
symmetry over a small cyclotomic field.  On the other hand, the degree
of the corresponding ray class field over the rationals is quite
large, namely $100\,800$.  When considered as extensions of the
corresponding cyclotomic field generated by a primitive $1688$-th
root of unity, however, the degree of the ray class field is only
$120$.  There is also some chance to convert the numerical solutions
into exact ones.\cite{ACFW17}

Similar to the Fibonacci matrix investigated here, there are other
candidates for matrices that give rise to putative families of
additional symmetries for Weyl-Heisenberg covariant SIC-POVMs.  We
leave this to future work.\cite{GrScSe17}

Despite the simplifications due to the additional symmetries, there
are clearly limits up to which dimension solutions can be found by
direct computation.  So we hope that the new families of putative
symmetries, in combination with all the structure of SIC-POVMs that
has been brought to light so far, will eventually give way to a 
construction of an infinite family of SIC-POVMs.

\acknowledgments The authors would like to thank Gary McConnell for
sharing his findings about the relations between SIC-POVMs in
different dimensions, as well as Ingemar Bengtsson for providing a
preliminary version of Ref. \citenum{ABDF17}.  Additional computing
resources provided by the Russel Division at MPL, Erlangen, as well as
Beno\^{\i}t Gr\'{e}maud at the Centre for Quantum Technologies,
Singapore, are acknowledged as well.


\begin{thebibliography}{17}%
\makeatletter
\providecommand \@ifxundefined [1]{%
 \@ifx{#1\undefined}
}%
\providecommand \@ifnum [1]{%
 \ifnum #1\expandafter \@firstoftwo
 \else \expandafter \@secondoftwo
 \fi
}%
\providecommand \@ifx [1]{%
 \ifx #1\expandafter \@firstoftwo
 \else \expandafter \@secondoftwo
 \fi
}%
\providecommand \natexlab [1]{#1}%
\providecommand \enquote  [1]{``#1''}%
\providecommand \bibnamefont  [1]{#1}%
\providecommand \bibfnamefont [1]{#1}%
\providecommand \citenamefont [1]{#1}%
\providecommand \href@noop [0]{\@secondoftwo}%
\providecommand \href [0]{\begingroup \@sanitize@url \@href}%
\providecommand \@href[1]{\@@startlink{#1}\@@href}%
\providecommand \@@href[1]{\endgroup#1\@@endlink}%
\providecommand \@sanitize@url [0]{\catcode `\\12\catcode `\$12\catcode
  `\&12\catcode `\#12\catcode `\^12\catcode `\_12\catcode `\%12\relax}%
\providecommand \@@startlink[1]{}%
\providecommand \@@endlink[0]{}%
\providecommand \url  [0]{\begingroup\@sanitize@url \@url }%
\providecommand \@url [1]{\endgroup\@href {#1}{\urlprefix }}%
\providecommand \urlprefix  [0]{URL }%
\providecommand \Eprint [0]{\href }%
\providecommand \doibase [0]{http://dx.doi.org/}%
\providecommand \selectlanguage [0]{\@gobble}%
\providecommand \bibinfo  [0]{\@secondoftwo}%
\providecommand \bibfield  [0]{\@secondoftwo}%
\providecommand \translation [1]{[#1]}%
\providecommand \BibitemOpen [0]{}%
\providecommand \bibitemStop [0]{}%
\providecommand \bibitemNoStop [0]{.\EOS\space}%
\providecommand \EOS [0]{\spacefactor3000\relax}%
\providecommand \BibitemShut  [1]{\csname bibitem#1\endcsname}%
\let\auto@bib@innerbib\@empty
\bibitem [{\citenamefont {Zauner}(1999)}]{Zau99}%
  \BibitemOpen
  \bibfield  {author} {\bibinfo {author} {\bibfnamefont {G.}~\bibnamefont
  {Zauner}},\ }\emph {\bibinfo {title} {Grundz{\"u}ge einer nichtkommutativen
  {D}esigntheorie}},\ \href@noop {} {Ph.D. thesis},\ \bibinfo  {school}
  {Universit{\"a}t Wien} (\bibinfo {year} {1999})\BibitemShut {NoStop}%
\bibitem [{\citenamefont {Zauner}(2011)}]{Zau11}%
  \BibitemOpen
  \bibfield  {author} {\bibinfo {author} {\bibfnamefont {G.}~\bibnamefont
  {Zauner}},\ }\bibfield  {title} {\enquote {\bibinfo {title} {Quantum designs:
  Foundations of a non-commutative design theory},}\ }\bibfield  {journal} {\bibinfo  {journal}
  {International Journal of Quantum Information}\ }\textbf {\bibinfo {volume}
  {9}},\ \href {\doibase
  10.1142/S0219749911006776} {\bibinfo {pages} {445--508} }(\bibinfo {year} {2011})\BibitemShut
  {NoStop}%
\bibitem [{\citenamefont {Renes}\ \emph {et~al.}(2004)\citenamefont {Renes},
  \citenamefont {Blume-Kohout}, \citenamefont {Scott},\ and\ \citenamefont
  {Caves}}]{RBSC04}%
  \BibitemOpen
  \bibfield  {author} {\bibinfo {author} {\bibfnamefont {J.~M.}\ \bibnamefont
  {Renes}}, \bibinfo {author} {\bibfnamefont {R.}~\bibnamefont {Blume-Kohout}},
  \bibinfo {author} {\bibfnamefont {A.~J.}\ \bibnamefont {Scott}}, \ and\
  \bibinfo {author} {\bibfnamefont {C.~M.}\ \bibnamefont {Caves}},\ }\bibfield
  {title} {\enquote {\bibinfo {title} {Symmetric informationally complete
  quantum measurements},}\ }\bibfield
  {journal} {\bibinfo  {journal} {Journal of Mathematical Physics}\ }\textbf
  {\bibinfo {volume} {45}},\ \href {\doibase 10.1063/1.1737053} {\bibinfo {pages} {2171}} (\bibinfo {year}
  {2004})\BibitemShut {NoStop}%
\bibitem [{\citenamefont {Scott}\ and\ \citenamefont {Grassl}(2010)}]{ScGr10}%
  \BibitemOpen
  \bibfield  {author} {\bibinfo {author} {\bibfnamefont {A.~J.}\ \bibnamefont
  {Scott}}\ and\ \bibinfo {author} {\bibfnamefont {M.}~\bibnamefont {Grassl}},\
  }\bibfield  {title} {\enquote {\bibinfo {title} {Symmetric informationally
  complete positive-operator-valued measures: {A} new computer study},}\ }\bibfield  {journal} {\bibinfo  {journal}
  {Journal of Mathematical Physics}\ }\textbf {\bibinfo {volume} {51}},\
  \href
  {\doibase 10.1063/1.3374022} {\bibinfo {pages} {042203}} (\bibinfo {year} {2010})\BibitemShut {NoStop}%
\bibitem [{\citenamefont {Scott}(2017)}]{Sco17}%
  \BibitemOpen
  \bibfield  {author} {\bibinfo {author} {\bibfnamefont {A.~J.}\ \bibnamefont
  {Scott}},\ }\enquote {\bibinfo
  {title} {{SICs:} extending the list of solutions},}\ {\bibinfo {howpublished}
  \href {https://arxiv.org/abs/1703.03993}{arXiv:1703.03993} [quant-ph]} (\bibinfo {year} {2017})\BibitemShut {NoStop}%
\bibitem [{\citenamefont {Fuchs}, \citenamefont {Hoang},\ and\ \citenamefont
  {Stacey}(2017)}]{FHS17}%
  \BibitemOpen
  \bibfield  {author} {\bibinfo {author} {\bibfnamefont {C.~A.}\ \bibnamefont
  {Fuchs}}, \bibinfo {author} {\bibfnamefont {M.~C.}\ \bibnamefont {Hoang}}, \
  and\ \bibinfo {author} {\bibfnamefont {B.~C.}\ \bibnamefont {Stacey}},\
  }\bibfield  {title} {\enquote {\bibinfo {title} {The {SIC} question: History
  and state of play},}\ } {\bibfield
  {journal} {\bibinfo  {journal} {axioms}\ }\textbf {\bibinfo {volume} {6}},\
  \bibinfo {pages} {\href {\doibase 10.3390/axioms6030021}{21}} (\bibinfo {year} {2017})}\BibitemShut {NoStop}%
\bibitem [{\citenamefont {Appleby}(2005)}]{App05}%
  \BibitemOpen
  \bibfield  {author} {\bibinfo {author} {\bibfnamefont {D.~M.}\ \bibnamefont
  {Appleby}},\ }\bibfield  {title} {\enquote {\bibinfo {title} {Symmetric
  informationally complete-positive operator valued measures and the extended
  {Clifford} group},}\ }\bibfield
  {journal} {\bibinfo  {journal} {Journal of Mathematical Physics}\ }\textbf
  {\bibinfo {volume} {46}},\ \href {\doibase 10.1063/1.1896384} {\bibinfo {pages} {052107}} (\bibinfo {year}
  {2005})\BibitemShut {NoStop}%
\bibitem [{\citenamefont {Grassl}(2005)}]{Gra05}%
  \BibitemOpen
  \bibfield  {author} {\bibinfo {author} {\bibfnamefont {M.}~\bibnamefont
  {Grassl}},\ }\bibfield  {title} {\enquote {\bibinfo {title} {Tomography of
  quantum states in small dimensions},}\ }\bibfield  {journal} {\bibinfo  {journal}
  {Electronic Notes in Discrete Mathematics}\ }\textbf {\bibinfo {volume}
  {20}},\ \href {\doibase
  10.1016/j.endm.2005.05.060} {\bibinfo {pages} {151--164}} (\bibinfo {year} {2005}),\ \bibinfo
  {note} {{W}orkshop on Discrete Tomography and its Applications New York, USA,
  June 2005}\BibitemShut {NoStop}%
\bibitem [{\citenamefont {Grassl}\ and\ \citenamefont
  {Waldron}(2017)}]{GrWa17}%
  \BibitemOpen
  \bibfield  {author} {\bibinfo {author} {\bibfnamefont {M.}~\bibnamefont
  {Grassl}}\ and\ \bibinfo {author} {\bibfnamefont {S.}~\bibnamefont
  {Waldron}},\ }\href@noop {} {\enquote {\bibinfo {title} {Computing projective
  symmetries of frames},}\ } (\bibinfo {year} {2017}),\ \bibinfo {note} {in
  preparation}\BibitemShut {NoStop}%
\bibitem [{\citenamefont {Sloane}(2010)}]{OEIS}%
  \BibitemOpen
  \bibfield  {author} {\bibinfo {author} {\bibfnamefont {N.~J.~A.}\
  \bibnamefont {Sloane} (editor)},\ }\enquote {\bibinfo
  {title} {The on-line encyclopedia of integer sequences},}\ \bibinfo
  {howpublished} {published electronically at \href {https://oeis.org} {https://oeis.org}} (\bibinfo
  {year} {2010})\BibitemShut {NoStop}%
\bibitem [{\citenamefont {Graham}, \citenamefont {Knuth},\ and\ \citenamefont
  {Patashnik}(1994)}]{GKP94}%
  \BibitemOpen
  \bibfield  {author} {\bibinfo {author} {\bibfnamefont {R.~L.}\ \bibnamefont
  {Graham}}, \bibinfo {author} {\bibfnamefont {D.~E.}\ \bibnamefont {Knuth}}, \
  and\ \bibinfo {author} {\bibfnamefont {O.}~\bibnamefont {Patashnik}},\
  }\href@noop {} {\emph {\bibinfo {title} {Concrete Mathematics: A Foundation
  for Computer Science}}},\ \bibinfo {edition} {2nd}\ ed.\ (\bibinfo
  {publisher} {Addison-Wesley},\ \bibinfo {address} {Boston, MA, USA},\
  \bibinfo {year} {1994})\BibitemShut {NoStop}%
\bibitem [{\citenamefont {Appleby}\ \emph {et~al.}(2016)\citenamefont
  {Appleby}, \citenamefont {Flammia}, \citenamefont {McConnell},\ and\
  \citenamefont {Yard}}]{AFMY16}%
  \BibitemOpen
  \bibfield  {author} {\bibinfo {author} {\bibfnamefont {M.}~\bibnamefont
  {Appleby}}, \bibinfo {author} {\bibfnamefont {S.}~\bibnamefont {Flammia}},
  \bibinfo {author} {\bibfnamefont {G.}~\bibnamefont {McConnell}}, \ and\
  \bibinfo {author} {\bibfnamefont {J.}~\bibnamefont {Yard}},\ }\enquote {\bibinfo {title} {Generating
  ray class fields of real quadratic fields via complex equiangular lines},}\
   {\bibinfo {howpublished} {\href
  {https://arxiv.org/abs/1604.06098}{arXiv:1604.06098} [math.NT]}} (\bibinfo
  {year} {2016})\BibitemShut {NoStop}%
\bibitem [{SIC(2017)}]{SICs_online}%
  \BibitemOpen
  \href {http://sicpovm.markus-grassl.de} { {\bibinfo {title}
  {http://sicpovm.markus-grassl.de},}\ } (\bibinfo {year} {2017})\BibitemShut
  {NoStop}%
\bibitem [{\citenamefont {Appleby}\ \emph {et~al.}(2014)\citenamefont
  {Appleby}, \citenamefont {Bengtsson}, \citenamefont {Brierley}, \citenamefont
  {Ericsson}, \citenamefont {Grassl},\ and\ \citenamefont
  {Larsson}}]{ABBEGL04}%
  \BibitemOpen
  \bibfield  {author} {\bibinfo {author} {\bibfnamefont {D.~M.}\ \bibnamefont
  {Appleby}}, \bibinfo {author} {\bibfnamefont {I.}~\bibnamefont {Bengtsson}},
  \bibinfo {author} {\bibfnamefont {S.}~\bibnamefont {Brierley}}, \bibinfo
  {author} {\bibfnamefont {{\AA}.}~\bibnamefont {Ericsson}}, \bibinfo {author}
  {\bibfnamefont {M.}~\bibnamefont {Grassl}}, \ and\ \bibinfo {author}
  {\bibfnamefont {J.-{\AA}.}\ \bibnamefont {Larsson}},\ }\bibfield  {title}
  {\enquote {\bibinfo {title} {Systems of imprimitivity for the {Clifford}
  group},}\ }\href@noop {} {\bibfield  {journal} {\bibinfo  {journal} {Quantum
  Information \& Computation}\ }\textbf {\bibinfo {volume} {14}},\ \bibinfo
  {pages} {339--360} (\bibinfo {year} {2014})},\ {\bibinfo {note} {
  \href
  {https://arxiv.org/abs/1210.1055}{arXiv:1210.1055} [quant-ph]}}\BibitemShut {NoStop}%
\bibitem [{\citenamefont {Bosma}, \citenamefont {Cannon},\ and\ \citenamefont
  {Playoust}(1997)}]{Magma}%
  \BibitemOpen
  \bibfield  {author} {\bibinfo {author} {\bibfnamefont {W.}~\bibnamefont
  {Bosma}}, \bibinfo {author} {\bibfnamefont {J.~J.}\ \bibnamefont {Cannon}}, \
  and\ \bibinfo {author} {\bibfnamefont {C.}~\bibnamefont {Playoust}},\
  }\bibfield  {title} {\enquote {\bibinfo {title} {{The Magma Algebra System I:
  The User Language}},}\ }\href@noop {} {\bibfield  {journal} {\bibinfo
  {journal} {Journal of Symbolic Computation}\ }\textbf {\bibinfo {volume}
  {24}},\ \bibinfo {pages} {235--265} (\bibinfo {year} {1997})}\BibitemShut
  {NoStop}%
\bibitem [{\citenamefont {Appleby}\ \emph
  {et~al.}(2017{\natexlab{a}})\citenamefont {Appleby}, \citenamefont
  {Bengtsson}, \citenamefont {Dumitru},\ and\ \citenamefont
  {Flammia}}]{ABDF17}%
  \BibitemOpen
  \bibfield  {author} {\bibinfo {author} {\bibfnamefont {M.}~\bibnamefont
  {Appleby}}, \bibinfo {author} {\bibfnamefont {I.}~\bibnamefont {Bengtsson}},
  \bibinfo {author} {\bibfnamefont {I.}~\bibnamefont {Dumitru}}, \ and\
  \bibinfo {author} {\bibfnamefont {S.}~\bibnamefont {Flammia}},\ }{\enquote {\bibinfo {title} {Dimension
  towers of {SICs}. {I}. {Aligned} {SICs} and embedded tight frames},}\
  }\bibinfo {howpublished} {pre\-print \href
  {https://arxiv.org/abs/1707.09911} {arXiv:1707.09911} [quant-ph]} (\bibinfo
  {year} {2017}{\natexlab{a}})\BibitemShut {NoStop}%
\bibitem [{\citenamefont {Appleby}\ \emph {et~al.}(2017)\citenamefont
  {Appleby}, \citenamefont {Chien}, \citenamefont {Flammia},\ and\
  \citenamefont {Waldron}}]{ACFW17}%
  \BibitemOpen
  \bibfield  {author} {\bibinfo {author} {\bibfnamefont {M.}~\bibnamefont
  {Appleby}}, \bibinfo {author} {\bibfnamefont {T.-Y.}\ \bibnamefont {Chien}},
  \bibinfo {author} {\bibfnamefont {S.}~\bibnamefont {Flammia}}, \ and\
  \bibinfo {author} {\bibfnamefont {S.}~\bibnamefont {Waldron}},\ }\enquote {\bibinfo {title} {Constructing
  exact symmetric informationally complete measurements from numerical
  solutions},}\  {\bibinfo {howpublished} {\href
  {https://arxiv.org/abs/1703.05981}{arXiv:1703.05981}
  [quant-ph]}} (\bibinfo {year} {2017})\BibitemShut {NoStop}%
\bibitem [{\citenamefont {Grassl}, \citenamefont {Scott},\ and\ \citenamefont
  {Seyfarth}(2017)}]{GrScSe17}%
  \BibitemOpen
  \bibfield  {author} {\bibinfo {author} {\bibfnamefont {M.}~\bibnamefont
  {Grassl}}, \bibinfo {author} {\bibfnamefont {A.~J.}\ \bibnamefont {Scott}}, \
  and\ \bibinfo {author} {\bibfnamefont {U.}~\bibnamefont {Seyfarth}},\
  }\href@noop {} {\enquote {\bibinfo {title} {Symmetries of {W}eyl-{H}eisenberg
  {SIC-POVMs}},}\ } (\bibinfo {year} {2017}),\ \bibinfo {note} {in
  preparation}\BibitemShut {NoStop}%
\bibitem [{\citenamefont {Euler}(1765)}]{Eul1765}%
  \BibitemOpen
  \bibfield  {author} {\bibinfo {author} {\bibfnamefont {L.}~\bibnamefont
  {Euler}},\ }\bibfield  {title} {\enquote {\bibinfo {title} {Observationes
  {analytic\ae}},}\ }\href@noop {} {\bibfield  {journal} {\bibinfo  {journal}
  {Novi commentarii {academi\ae} scientiarum {Petropolitan\ae}},\ \bibinfo
  {pages} {124--143}} (\bibinfo {year} {1765})},\ \bibinfo {note} {reprinted in
  his Opera Omnia, series 1, volume 15, 50--69}\BibitemShut {NoStop}%
\end{thebibliography}
%

\appendix
\section{Relations for Fibonacci and Lucas Numbers}
In the following we will derive a couple of relations for Fibonacci
and Lucas numbers.

\begin{proposition}
  Let $\varphi=\frac{1+\sqrt{5}}{2}$ be the golden ratio which is the
  positive root of the equation $x^2=x+1$. Then we have
  the following well-known closed formulas for the Fibonacci
  \cite{Eul1765} and Lucas
  numbers (see, e.\,g., Exercise~28 in Ref.~\citenum{GKP94}):
  \begin{alignat}{5}
    F_n&{}=\frac{\varphi^n-(1-\varphi)^n}{\sqrt{5}}
      &&{}=\frac{\varphi^n-(-\varphi)^{-n}}{\sqrt{5}}\label{eq:fibobacci_closed}\\
    \text{and}\quad
    L_n&{}=\varphi^n+(1-\varphi)^n
      &&{}=\varphi^n+(-\varphi)^{-n}.\label{eq:lucas_closed}
  \end{alignat}
\end{proposition}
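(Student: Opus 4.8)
The plan is to exploit the fact that $\varphi$ and its Galois conjugate $\psi := 1-\varphi = \tfrac{1-\sqrt5}{2}$ are precisely the two roots of the equation $x^2 = x+1$. Consequently $\varphi^2 = \varphi+1$ and $\psi^2 = \psi+1$, so the geometric sequences $(\varphi^n)_{n\ge 0}$ and $(\psi^n)_{n\ge 0}$ each satisfy the Fibonacci/Lucas recurrence $a_{n+1}=a_n+a_{n-1}$ (for $n\ge 1$), and hence so does every linear combination $\alpha\varphi^n+\beta\psi^n$.

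First I would record the elementary identities $\varphi+\psi = 1$, $\varphi-\psi=\sqrt5$, and $\varphi\psi=-1$ (the latter two being the sum and product of the roots of $x^2-x-1$). From $\varphi\psi=-1$ one gets $\psi=-\varphi^{-1}=(-\varphi)^{-1}$, so $\psi^n=(-\varphi)^{-n}$; this is exactly the identity that converts the first closed form on each line into the second, so the two right-hand sides in each of \eqref{eq:fibobacci_closed} and \eqref{eq:lucas_closed} agree and it suffices to establish the first form. Next, put $\widehat F_n := (\varphi^n-\psi^n)/\sqrt5$ and $\widehat L_n := \varphi^n+\psi^n$. By the observation above both sequences obey $a_{n+1}=a_n+a_{n-1}$ for $n\ge1$, and the base cases check out: $\widehat F_0 = 0$ and $\widehat F_1 = (\varphi-\psi)/\sqrt5 = 1$, while $\widehat L_0 = 2$ and $\widehat L_1 = \varphi+\psi = 1$, matching $F_0,F_1$ and $L_0,L_1$ respectively. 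Since a second-order linear recurrence is uniquely determined by any two consecutive values, an immediate induction on $n$ yields $\widehat F_n = F_n$ and $\widehat L_n = L_n$ for all $n\ge 0$.

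There is no genuine obstacle here; the proposition is classical. The only points that need a moment's attention are the sign bookkeeping — in particular verifying $1-\varphi = -\varphi^{-1}$, equivalently $\varphi-\varphi^2 = -1$, which is just the defining relation $\varphi^2=\varphi+1$ rearranged, so that the alternative forms involving $(-\varphi)^{-n}$ are correct — and the fact that the recurrences in the definitions are stated only for $n\ge1$, so one must verify \emph{both} $n=0$ and $n=1$ as base cases before running the induction.
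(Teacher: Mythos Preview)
Your proof is correct and complete. The paper itself does not supply a proof of this proposition: it treats the Binet formulas as well-known, citing Euler and Graham--Knuth--Patashnik, and moves on. Your argument---identifying $\varphi$ and $1-\varphi$ as the two roots of $x^2=x+1$, observing that any linear combination of the geometric sequences $(\varphi^n)$ and $((1-\varphi)^n)$ satisfies the recurrence, and then matching the two initial values---is the standard textbook derivation, and the sign check $1-\varphi=(-\varphi)^{-1}$ that reconciles the two displayed forms is handled cleanly.
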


\begin{proposition}
  For $k,n\ge 1$, the following identities hold:
  \begin{alignat}{5}
    i)&\qquad&F_n&{}\mid F_{kn}\label{eq:divisibility_Fibonacci}\\
    ii)&&    L_n&{}=F_{n-1}+F_{n+1}\label{eq:lucas_sum_fibonacci}\\
    iii)&&      L_{2n}^2&{}=5
    F_{2n}^2+4\label{eq:squarefree}\\
    iv)&&    F_{6n}&{}=F_{2n} (L_{4n}+1)\label{eq:F_6n}\\
    v)&&    L_{4n}+1&{}=(L_{2n}+1)(L_{2n}-1)=L_{2n}^2-1\label{eq:L_4n}\\
    vi)&& F_{6n-1}+L_{2n}&{}= F_{2n-1}(L_{4n}+1)\label{eq:F_6n-1a}
  \end{alignat}
\end{proposition}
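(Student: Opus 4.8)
The plan is to reduce everything to the closed (Binet) formulas \eqref{eq:fibobacci_closed}--\eqref{eq:lucas_closed} together with the elementary relations $\varphi\psi=-1$ and $\varphi-\psi=\sqrt5$, where $\psi:=1-\varphi=-\varphi^{-1}$. Writing $F_n=(\varphi^n-\psi^n)/\sqrt5$ and $L_n=\varphi^n+\psi^n$ and using $\varphi^a\psi^b=(-1)^b\varphi^{a-b}$, $\psi^a\varphi^b=(-1)^b\psi^{a-b}$, a one-line expansion of the product gives the ``master'' multiplication formula
\begin{equation}
F_aL_b=F_{a+b}+(-1)^bF_{a-b},
\label{eq:masterFL}
\end{equation}
valid for all integers $a,b$ under the standard extension $F_{-m}=(-1)^{m+1}F_m$, $L_{-m}=(-1)^mL_m$ (which themselves follow from the closed forms, since $\varphi^{-1}=-\psi$). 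Squaring the closed forms likewise yields $L_m^2=L_{2m}+2(-1)^m$ and $5F_m^2=L_{2m}-2(-1)^m$. All six identities then follow mechanically.

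For \eqref{eq:divisibility_Fibonacci} I would induct on $k$: the case $k=1$ is trivial, and the addition law $F_{(k+1)n}=F_{kn+1}F_n+F_{kn}F_{n-1}$ (proved by a short induction on the second index, or read off from \eqref{eq:masterFL} together with its $L$-analogue) shows that $F_n\mid F_{kn}$ implies $F_n\mid F_{(k+1)n}$. Identity \eqref{eq:lucas_sum_fibonacci} is immediate from the closed forms once one notes $1+\varphi^2=\sqrt5\,\varphi$ and $1+\psi^2=-\sqrt5\,\psi$ (equivalently: $F_{n-1}+F_{n+1}$ satisfies the Fibonacci recurrence with initial values $2,1$, hence equals $L_n$). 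Subtracting the two squaring formulas gives $L_m^2-5F_m^2=4(-1)^m$, which at $m=2n$ is exactly \eqref{eq:squarefree}; and $L_{2m}=L_m^2-2(-1)^m$ at $m=2n$ gives $L_{4n}=L_{2n}^2-2$, i.e.\ \eqref{eq:L_4n}, the factorization $L_{2n}^2-1=(L_{2n}+1)(L_{2n}-1)$ being trivial.

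Finally, \eqref{eq:F_6n} and \eqref{eq:F_6n-1a} come straight out of \eqref{eq:masterFL}. Taking $a=m$, $b=2m$ gives $F_mL_{2m}=F_{3m}+F_{-m}=F_{3m}-(-1)^mF_m$, hence $F_{3m}=F_m\bigl(L_{2m}+(-1)^m\bigr)$; at $m=2n$ this is \eqref{eq:F_6n}. Taking $a=m-1$, $b=2m$ gives $F_{m-1}L_{2m}=F_{3m-1}+F_{-(m+1)}=F_{3m-1}+(-1)^mF_{m+1}$, so $F_{3m-1}=F_{m-1}L_{2m}-(-1)^mF_{m+1}$ and therefore
\[
F_{3m-1}+L_m-F_{m-1}(L_{2m}+1)=L_m-F_{m-1}-(-1)^mF_{m+1},
\]
which for even $m=2n$ equals $L_{2n}-(F_{2n-1}+F_{2n+1})=0$ by \eqref{eq:lucas_sum_fibonacci}; this is \eqref{eq:F_6n-1a}. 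There is no serious obstacle here: the only thing needing attention is the bookkeeping of the signs $(-1)^m$ and the negative-index conventions, and the fact that every index of the shape $2n$ is even is precisely what makes \eqref{eq:squarefree}, \eqref{eq:L_4n}, \eqref{eq:F_6n} and \eqref{eq:F_6n-1a} emerge cleanly, without stray sign factors.
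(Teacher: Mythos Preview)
Your proof is correct and rests on the same engine as the paper's: the Binet closed forms \eqref{eq:fibobacci_closed}--\eqref{eq:lucas_closed}. The only difference is organizational. The paper expands each of $iii)$--$vi)$ directly in powers of $\varphi$ and $-\varphi^{-1}$ and simplifies case by case, whereas you first extract the general identities $F_aL_b=F_{a+b}+(-1)^bF_{a-b}$, $L_m^2=L_{2m}+2(-1)^m$, $5F_m^2=L_{2m}-2(-1)^m$ once and then specialize; in particular your derivation of \eqref{eq:F_6n-1a} routes through the master product formula and \eqref{eq:lucas_sum_fibonacci} rather than a bare Binet expansion. This packaging is arguably cleaner and makes the role of the parity of $2n$ explicit, but it is not a different idea---both arguments are the same Binet computation, yours with one extra layer of abstraction. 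For $i)$ and $ii)$ you and the paper both fall back on induction via the defining recurrences, so there is no difference there either.
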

\begin{proof}
  \begin{description}
    \item[\normalfont$i)$] By induction it can be shown that $F_{kn}$
      is divisible by $F_n$ (see, e.\,g., Eq.~(6.111) in
      Ref.~\citenum{GKP94}).
    \item[\normalfont$ii)$] We prove the identity by induction.  It
      clearly holds for $n=1$. Using the defining
      relations, we compute
      \begin{alignat}{5}
        L_{n+1}=L_n+L_{n-1}&{}=(F_{n-1}+F_{n+1})+(F_{n-2}+F_{n})\nonumber\\
        &{}=F_{n-1}+F_{n-2}+F_{n+1}+F_{n}
        &{}=F_n+F_{n+2}.
      \end{alignat}
    \item[\normalfont$iii)$] Using the closed formulas \eqref{eq:fibobacci_closed}
      and \eqref{eq:lucas_closed}, we compute
      \begin{alignat}{5}
      5 F_{2n}^2+4{}& =(\varphi^{2n}-(-\varphi)^{-2n})^2+4 
      &&{}=\varphi^{4n}-2+(-\varphi)^{-4n}+4\nonumber\\
      &&&{}=\varphi^{4n}+2+(-\varphi)^{-4n}\nonumber\\
      &&&{}=(\varphi^{2n}+(-\varphi)^{-2n})^2
      &{}=L_{2n}^2.
      \end{alignat}
    \item[\normalfont$iv)$] Similar as before, we compute
      \begin{alignat}{9}
        F_{2n}(L_{4n}+1)&{}=\frac{1}{\sqrt{5}}(\varphi^{2n}-(-\varphi)^{-2n})(\varphi^{4n}+(-\varphi)^{-4n}+1)\nonumber\\
        &{}=\frac{1}{\sqrt{5}}(\varphi^{6n}-(-\varphi)^{-6n})
        &{}=F_{6n}.
      \end{alignat}
    \item[\normalfont$v)$] Again, direct computations shows
      \begin{alignat}{9}
        (L_{2n}+1)(L_{2n}-1)&{}=(\varphi^{2n}+(-\varphi)^{-2n}+1)(\varphi^{2n}+(-\varphi)^{-2n}-1)\nonumber\\
        &{}=\varphi^{4n}+(-\varphi)^{-4n}+1
        &{}=L_{4n}+1.
      \end{alignat}
    \item[\normalfont$vi)$] In order to prove the claim, consider
      \begin{alignat}{9}
        F_{2n-1}(L_{4n}+1)
        &{}=\frac{1}{\sqrt{5}}(\varphi^{2n-1}-(-\varphi)^{-2n+1})(\varphi^{4n}+(-\varphi)^{-4n}+1)\nonumber\\
        &{}=\frac{1}{\sqrt{5}}(\varphi^{6n-1}+\varphi^{2n+1}+\varphi^{2n-1}+\varphi^{-2n+1}+\varphi^{-2n-1}+\varphi^{-6n+1})\nonumber\\
        &{}=\frac{\varphi^{6n-1}+\varphi^{-6n+1}}{\sqrt{5}}+\frac{(\varphi^{2n}+\varphi^{-2n})(\varphi+\varphi^{-1})}{\sqrt{5}}\nonumber\\
        &{}=\frac{\varphi^{6n-1}-(-\varphi)^{-6n+1}}{\sqrt{5}}+\varphi^{2n}+(-\varphi)^{-2n}\nonumber\\
        &{}=F_{6n-1}+L_{2n}
      \end{alignat}
  \end{description}
\end{proof}

\end{document}